\documentclass[runningheads]{llncs}

\usepackage{amssymb}
\usepackage{amsmath}
\usepackage{booktabs}
\usepackage{graphicx}
\usepackage{xspace}
\usepackage{hyperref}
\usepackage{todonotes}
\usepackage[font=small,labelfont=bf]{caption}
\usepackage[subrefformat=parens,labelfont=default,font=small]{subcaption}
\usepackage[misc,geometry]{ifsym} 
\usepackage{wrapfig}

\title{Variants of the Segment Number of a Graph\thanks{A.W.\
    acknowledges support from DFG grant WO$\,$758/9-1.}}

\author{%
  Yoshio Okamoto\inst{1}\lncsarxiv{\orcidID{0000-0002-9826-7074}}{}%
  \and%
  Alexander Ravsky\inst{2}
  \and%
  Alexander~Wolff\inst{3}\lncsarxiv{\orcidID{0000-0001-5872-718X}$^{\textrm{(\Letter)}}$}{}
}

\authorrunning{Y.~Okamoto et al.}

\institute{University of Electro-Communications, Ch\=ofu, Japan \\
  and RIKEN Center for Advanced Intelligence Project, Tokyo, Japan%
  \and%
  Pidstryhach Institute for Applied Problems
  of Mechanics and Mathematics, National Academy of Sciences of
  Ukraine, Lviv, Ukraine \\
  \email{alexander.ravsky@uni-wuerzburg.de}
  \and%
  Universit\"at W\"urzburg, W\"urzburg, Germany \lncsarxiv{\\
  \email{usetheemailaddressonmyhomepage@gmail.com}}{}
}

\newcommand{\lncsarxiv}[2]{#2}

\graphicspath{{figures/}}

\newtheorem{open}{Open Problem}

\DeclareMathOperator{\seg}{seg}
\newcommand{\segx}{\ensuremath{\seg_{\hspace{-.1ex}\times\!}}\xspace}
\newcommand{\segbend}{\ensuremath{\seg_\angle}\xspace}
\newcommand{\segtwo}{\ensuremath{\seg_2}\xspace}
\newcommand{\segthree}{\ensuremath{\seg_3}\xspace}
\newcommand{\T}{\ensuremath{\mathcal{T}}}
\renewcommand{\S}{\ensuremath{\mathcal{S}}}
\DeclareMathOperator{\slope}{slope}
\newcommand{\AGR}{\textsc{Arrangement Graph Recognition}\xspace}
\newcommand{\PSTR}{\textsc{Pseudoline Stretchability}\xspace}
\newcommand{\ETR}{\ensuremath{\exists\mathbb{R}}\xspace}
\newcommand{\E}{\ensuremath{\exists}}

\let\doendproof\endproof
\renewcommand\endproof{~\hfill$\qed$\doendproof}

\begin{document}

\maketitle

\begin{abstract}
  The \emph{segment number} of a planar graph is the smallest number
  of line segments whose union represents a crossing-free straight-line
  drawing of the given graph in the plane.  The segment number is a
  measure for the visual complexity of a drawing; it has been studied
  extensively.

  In this paper, we study three variants of the segment number: for
  planar graphs, we consider crossing-free polyline drawings in~2D;
  for arbitrary graphs, we consider crossing-free straight-line
  drawings in~3D and straight-line drawings with crossings in~2D.
  We first construct an infinite family of planar graphs where the
  classical segment number is asymptotically twice as large as each of
  the new variants of the segment number.  Then we establish the
  $\exists\mathbb{R}$-completeness (which implies the NP-hardness) of
  all variants.  Finally, for cubic graphs, we prove lower and upper
  bounds on the new variants of the segment number, depending on the
  connectivity of the given graph.
\end{abstract}

\section{Introduction}

When drawing a graph, a way to keep the visual complexity low is to
use few geometric objects for drawing the edges.  This idea is
captured by the \emph{segment number} of a (planar) graph, that is,
the smallest number of crossing-free line segments that together
constitute a straight-line drawing of the given graph.
The \emph{arc number} of a graph is defined analogously with respect
to circular-arc drawings.
So far, both numbers have only been studied for planar graphs.
Two obvious lower bounds for the segment number are
known~\cite{desw-dpgfs-CGTA07}: (i)~$\eta(G)/2$, where $\eta(G)$ is
the number of odd-degree vertices of~$G$, and (ii)~the planar slope
number of~$G$, that is, the smallest number~$k$ such that $G$ admits a
crossing-free straight-line drawing whose edges have $k$ different
slopes.

Dujmovi\'c et al.~\cite{desw-dpgfs-CGTA07}, who introduced segment
number and planar slope number, showed among others that trees can be
drawn without crossings such that the optimum segment number and the
optimum planar slope number are achieved simultaneously.  In fact,
any tree~$T$ admits a drawing with $\eta(T)/2$ segments and
$\Delta(T)/2$ slopes, where $\Delta(T)$ is the maximum degree
of~$T$.  Unfortunately, these drawings need exponential area.  
Therefore, Schulz~\cite{s-dgfa-JGAA15} suggested to
study the arc number of planar graphs.  Among other things, he showed
that any $n$-vertex tree can be drawn on a polynomial-size grid
($O(n^{1.81}) \times n$) using at most $3n/4$ arcs.

Another measure for the visual complexity of a drawing of a graph is
the minimum number of \emph{lines} whose union contains a straight-line
crossing-free drawing of the given graph.  This parameter is called
the \emph{line cover number} of a graph~$G$ and denoted by
$\rho^1_2(G)$ for~2D (where $G$ must be planar) and $\rho^1_3(G)$
for~3D.  Together with the plane cover number $\rho^2_3(G)$ and other
variants, these parameters have been introduced by Chaplick et
al.~\cite{ChaplickFLRVW16}.  They also showed that both line cover
numbers are \ETR-hard to compute \cite{ChaplickFLRVW16b}.  (For
background on \ETR, see Schaefer's work~\cite{s-cgtp-GD09}.)

Upper bounds for the segment number and the arc number (in terms of
the number of vertices, $n$, ignoring constant additive terms) are
known for series-parallel graphs ($3n/2$ vs.\ $n$), planar 3-trees
($2n$ vs.\ $11n/6$), and triconnected planar graphs ($5n/2$ vs.\ $2n$)
\cite{desw-dpgfs-CGTA07,s-dgfa-JGAA15}.  The upper bound on the
segment number for triconnected planar graphs has been improved for
the special cases of triangulations and 4-connected triangulations
(from $5n/2$ to $7n/3$ and $9n/4$, respectively) by Durocher and
Mondal~\cite{dm-dptfs-CCCG14}.  For the special case of triconnected
cubic graphs, Dujmovi\'c et al.~\cite{desw-dpgfs-CGTA07} showed that 
the segment number is upperbounded by $n+2$.  (A cubic graph with $n$
vertices has $3n/2$ edges.)  The result of Dujmovi\'c et al.\ was
improved by Mondal et al.~\cite{mnbr-mscd3-JCO13} who gave two
linear-time algorithms based on cannonical decompositions; one that
uses at most $n/2+3$ segments for $n \ge 6$ and one that uses $n/2+4$
segments but places all vertices on a grid of size $n \times n$.  Both
algorithms use at most six different slopes.  Note that $n/2+3$
segments are optimal for cubic planar graphs since in every vertex at
least one segment must end and in the at least three vertices on the
convext hull all three incident segments must end.  Igamberdiev et
al.~\cite{ims-dpc3c-JGAA17} fixed a bug in the algorithm of Mondal et
al., presented two conceptually different (but slower) algorithms that
meet the lower bound and compared them experimentally in terms of
common metrics such as angular resolution.

H{\"u}ltenschmidt et al.~\cite{hkms-dttfg-WG17}
provided bounds for segment and arc number under the additional
constraint that vertices must lie on a polynomial-size grid.  They
also showed that $n$-vertex triangulations can be drawn with at most
$5n/3$ arcs, which is better than the lower bound of $2n$ for the
segment number on this class of graphs.  For 4-connected
triangulations, they need at most $3n/2$ arcs.  Kindermann et
al.~\cite{kmss-dpgfs-GD19} recently strengthened some of these
results by showing that many classes of planar graphs admit
nontrivial bounds on the segment number even when restricting
vertices to a grid of size $O(n) \times O(n^2)$.  For drawing
$n$-vertex trees with at most $3n/4$ segments, they reduced the grid
size to $n \times n$.  Among other things, Durocher et
al.~\cite{dmnw-nmsdp-JGAA13} showed that the segment number is NP-hard
to compute \emph{with respect to a fixed embedding}, even in the
special case of arrangement graphs.  They also showed that the following
partial representation extension problem is NP-hard: given an
outerplanar graph~$G$, an integer~$k$, and a straight-line
drawing~$\delta$ of a subgraph of~$G$, is there a $k$-segment drawing
that contains~$\delta$?  It is still open, however, whether the
segment number is fixed-parameter tractable.

In this paper, we consider several variants of the planar segment
number~\segtwo that has been studied extensively.  In particular, we
study the \emph{3D segment number} \segthree, which is the most obvious
generalization of the planar segment number.  It is the smallest
number of straight-line segments needed for a crossing-free
straight-line drawing of a given graph in 3D.  We also study the
\emph{crossing segment number} \segx in 3D, where edges are allowed to
cross, but they are not allowed to overlap or to contain vertices in their
interiors. In this case, by Lemma~\ref{lem:projection}, the minimum number of segments
constituting a drawing of a given graph can be achieved by a plane drawing.
 Finally, for planar graphs, we study the \emph{bend segment
  number} \segbend in 2D, which is the smallest number of straight-line
segments needed for a crossing-free polyline drawing of a given graph
in 2D.

Durocher et al.~\cite{dmnw-nmsdp-JGAA13} were also interested in the
3D segment number.  They stated that their proof of the NP-hardness of
the above-mentioned partial representation problem can be adjusted
to~3D.  They suspected that the 3D segment number remains NP-hard to
compute even if the given graph is subcubic.  Instead, they showed
that a variant of the 3D segment number is NP-hard where one is given a
3D drawing and additional co-planarity constraints that must be
fulfilled in the final drawing.

\paragraph{Our Contribution.}
First, we establish some relationships between the variants of the
segment number; see Section~\ref{sec:relationships}.  Then we turn to
the complexity of computing the new variants of the segment number;
see Section~\ref{sec:hardness}.  By re-using ideas from the
\ETR-completeness proof of Chaplick et al.~\cite{ChaplickFLRVW16b}
regarding the computation of the line cover numbers~$\rho^1_2$
and~$\rho^1_3$, we establish the \ETR-completeness (and hence the
NP-hardness) of all variants of the segment number~--
\segtwo, \segthree, \segx, and \segbend~-- even for
graphs of maximum degree~$4$.  Thus, we nearly answer the open problem of Durocher
et al.~\cite{dmnw-nmsdp-JGAA13} concerning the computational
complexity of the 3D segment number for subcubic graphs.
Note that Hoffmann~\cite{Hoffmann17} recently established the 
\ETR-hardness of computing the slope number $\slope(G)$ of a 
planar graph $G$. 

Our main contribution consists in algorithms and lower-bound
constructions for connected ($\gamma=1$), biconnected ($\gamma=2$),
and triconnected ($\gamma=3$) cubic graphs; see 
Table~\ref{tab:results}.  To put these results into perspective, recall
that any cubic graph with $n$ vertices needs at least $n/2+3$ and at
most $3n/2$ segments to be drawn, regardless of the drawing style.
(In contrast, four slopes slopes suffice for cubic graphs
\cite{ms-grcg4-CGTA09}).  We prove our bounds in
Section~\ref{sec:cubic}.  Note that for cubic graphs, vertex- and
edge-connectivity are the same \cite[Thm.~2.17]{cz-cgt-08}.

\begin{table}[tb]
  \centering

  \caption{Overview over existing and new bounds on variants of the
    segment number of cubic graphs.  The upper bounds hold for all
    $n$-vertex graphs of a certain vertex connectivity~$\gamma$.
    The lower bounds are existential; there exist graphs for which
    they hold.  Note that \segtwo and \segbend are defined only for
    planar graphs.  We skip more specialized known results (e.g.,
    concerning grid size~\cite{hkms-dttfg-WG17}
    or triangulations~\cite{dm-dptfs-CCCG14}).}
  \label{tab:results}

  \medskip

  \newcommand{\mysp}{\hspace{3.5ex}}
  \begin{tabular}{@{}l@{\mysp}ll@{\mysp}ll@{\mysp}ll@{\mysp}ll@{}}
    \toprule
    $\gamma$ & \multicolumn{2}{c}{$\segtwo(G)$}
    & \multicolumn{2}{c}{$\segthree(G)$}
    & \multicolumn{2}{c}{$\segbend(G)$}
    & \multicolumn{2}{c}{$\segx(G)$} \\
    \midrule
    1 & $\ge 5n/6$ & [Prp.~\ref{prop:5n-over-6-example}]
      & $\ge 5n/6$ & [Prp.~\ref{prop:5n-over-6-example}]
      & $\ge 5n/6$ & [Prp.~\ref{prop:5n-over-6-example}]
      & $\ge 5n/6$ & [Prp.~\ref{prop:5n-over-6-example}] \\
    2 & & %
      & $\le n+2$  & [Th.~\ref{thm:bi-cubic-3d}]
      & $\le n+1$  & [Th.~\ref{thm:planar-bi-cubic-bend}]
      & $\le n+2$  & [Th.~\ref{thm:bi-cubic-3d}] \\
      & $\ge 3n/4$ & [Prp.~\ref{prop:claw-cycle}]
      & $\ge 5n/6$ & [Prp.~\ref{prop:K33-cycle}]
      & $\ge 3n/4$ & [Prp.~\ref{prop:claw-cycle}]
      & $\ge 3n/4$ & [Prp.~\ref{prop:claw-cycle}] \\
    3 & $=  n/2+3$ & \cite{ims-dpc3c-JGAA17,mnbr-mscd3-JCO13}
      & $\le n+2$  & [Th.~\ref{thm:bi-cubic-3d}]
      &&& $\le n+2$& [Th.~\ref{thm:bi-cubic-3d}] \\
      & \multicolumn{2}{l@{\mysp}}{\hspace*{-1.5ex}(except for $G=K_4$)} 
      & $\ge 7n/10$ & [Prp.~\ref{prop:tri-cubic-3d}]
      & \multicolumn{2}{c}{$\segbend\equiv\segtwo$} & \\
    \bottomrule
  \end{tabular}
\end{table}

Before we start, we introduce the following notation.
For a given polyline drawing~$\delta$ of a graph in~2D or~3D,
we denote by~$\seg(\delta)$ the number of (inclusionwise maximal)
straight-line segments of which the drawing~$\delta$ consists.

\section{Relationships Between Segment Number Variants}
\label{sec:relationships}

\begin{lemma}
  \label{lem:projection}
  Given a graph $G$ and a straight-line drawing $\delta$ of~$G$ in 3d
  with the property that no two edges overlap and no edge contains a
  vertex in its interior, then there exists a plane drawing~$\delta'$
  of~$G$ with $\seg(\delta') \le \seg(\delta)$ and with the same
  property as~$\delta$.  (Note that both in $\delta$ and $\delta'$
  edges may cross.)
\end{lemma}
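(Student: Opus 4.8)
The plan is to take the 3D drawing $\delta$ and project it onto a generic plane, showing that a suitable projection preserves both the collinearity structure (hence the segment count) and the non-degeneracy properties (no overlaps, no vertex in an edge interior). The key observation is that these are all properties that can be destroyed only by ``accidental'' alignments created by the projection, and such bad alignments occur only for a measure-zero set of projection directions, so a good direction exists.

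Concretely, I would first set up the projection. Let $\pi_{\mathbf{u}}$ denote orthogonal projection onto a plane with unit normal $\mathbf{u}$; equivalently, one may project along a direction $\mathbf{u}$ onto some fixed plane. The number of inclusionwise-maximal segments in $\delta$ equals the number of maximal collinear groups of edges: edges $e$ and $f$ lie on a common segment iff their supporting lines coincide and the edges are contiguous. The crucial monotonicity fact is that projection can only \emph{merge} segments, never split them: if two edges of~$\delta$ are collinear in 3D, their images are collinear in every projection, so $\seg(\delta') \le \seg(\delta)$ holds for \emph{every} projection direction~$\mathbf{u}$. Thus the inequality on segment numbers is free; the entire difficulty lies in choosing $\mathbf{u}$ so that $\delta'$ retains the forbidden-configuration property, i.e.\ so that projection does not \emph{accidentally} create an overlap, place a vertex in the interior of an edge, or create a coincidence that would spuriously lower the count below what a legal plane drawing allows.

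The main step is therefore a genericity argument. I would enumerate the finitely many ``bad events'' that a projection direction must avoid. First, for each pair of non-collinear edges in 3D, their images must not overlap along a positive-length segment; overlap of the images forces the four endpoints into a common plane through the projection direction, which constrains $\mathbf{u}$ to a lower-dimensional set. Second, for each vertex $v$ and each edge $e$ with $v \notin e$ in 3D, the image of $v$ must not fall in the relative interior of the image of $e$; this again forces $v$, the endpoints of $e$, and the direction $\mathbf{u}$ into a coplanarity relation, cutting out a measure-zero set of directions. Third, I must avoid directions that collapse an edge to a point, i.e.\ $\mathbf{u}$ parallel to an edge, which is again finitely many excluded directions. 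Since there are finitely many vertices and edges, the union of all excluded direction sets is a finite union of lower-dimensional (measure-zero) subsets of the sphere of directions, so its complement is nonempty (indeed dense); pick any $\mathbf{u}$ outside it and set $\delta' = \pi_{\mathbf{u}} \circ \delta$.

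The hard part will be making the coplanarity/measure-zero claims precise and verifying that I have captured \emph{all} the ways projection can introduce a bad configuration, rather than hand-waving ``generic projection works.'' In particular I must be careful that merging two 3D-collinear edges in the image is always legitimate (they were already on one segment, so no new violation arises) while merging two non-collinear edges is exactly the overlap event I exclude; and I must confirm that a vertex landing on an edge interior is the only new adjacency-type degeneracy possible. A clean way to discharge the measure-zero claims is to phase each bad event as the vanishing of a fixed nonzero polynomial (or the parallelism of two fixed vectors) in the coordinates of~$\mathbf{u}$, whose zero set is a proper algebraic subvariety of the direction space and hence has measure zero; the finite union of these subvarieties cannot cover the whole space, which yields the desired good direction and completes the proof.
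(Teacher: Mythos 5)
Your overall strategy---project parallel to a generic direction and argue that the bad directions form a finite union of measure-zero subsets of the direction space---is exactly the paper's proof. The paper's excluded set is the union, over all triples $u,v,w$ of distinct vertex points, of the planes (or lines) spanned by $\overrightarrow{uv}$ and $\overrightarrow{wv}$, and your monotonicity observation (projection can merge segments but never split them, so $\seg(\delta')\le\seg(\delta)$ for any nondegenerate direction) is the same reason the paper can simply say the inequality holds ``by construction.''

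There is, however, one concrete incompleteness, and it sits precisely at the point you flagged as the crux (``verifying that I have captured \emph{all} the ways projection can introduce a bad configuration''): you never exclude directions parallel to the line through two vertices that are non-adjacent in~$G$. For such a direction the two vertices project to the same point, so $\delta'$ is not a drawing of~$G$ at all. If the two colliding vertices are adjacent, your third event (edge collapse) catches it, and if they have a common neighbor~$a$, your first event catches it (the images of the two edges into~$a$ overlap); but for two vertices at graph distance at least~$3$, none of your three events is violated by the collision: no two edge images overlap, no edge collapses, and the collision point is an \emph{endpoint}, not an interior point, of every affected edge image. So the direction you pick in the complement of your three excluded sets may well identify two vertices, and the argument as written fails. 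The repair is immediate inside your own framework---add, for every pair of vertices, the direction of the line through them to the excluded set, which is finitely many more measure-zero sets. It is worth noting that the paper's triple-based condition subsumes this automatically: since $\overrightarrow{uw}=\overrightarrow{uv}-\overrightarrow{wv}$ lies in the span of $\overrightarrow{uv}$ and $\overrightarrow{wv}$, any direction collapsing $u$ onto $w$ already lies in one of the forbidden planes (provided $G$ has at least three vertices), which is one advantage of phrasing the bad set uniformly over vertex triples rather than event by event.
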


\begin{proof}
  For each triplet $u,v,w$ of points in~$\delta$ that correspond to 
  three distinct vertices of~$G$, let $P(u,v,w)$ be the
  plane or line spanned by the vectors $\overrightarrow{uv}$ and
  $\overrightarrow{wv}$, and
  let~$\mathcal P$ be the set of all such planes or lines.
  Choose a point~$A$ in $\mathbb{R}^3 \setminus \bigcup \mathcal{P}$
  that does not lie in the xy-plane.  Let~$\delta'$ be the drawing
  that results from projecting~$\delta$ parallel to the vector $OA$
  onto the xy-plane.  Due to the choice of the projection, $\delta'$
  may contain crossings, but no edge contains a vertex to which it is
  not incident, and no two edges overlap. By construction,
  $\seg(\delta') \le \seg(\delta)$.
\end{proof}

\begin{corollary}
  \label{cor:projection}
  For any graph~$G$ it holds that $\segx(G) \le \segthree(G)$.
\end{corollary}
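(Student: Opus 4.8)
The plan is to exhibit a feasible drawing for $\segx$ whose segment count does not exceed $\segthree(G)$, by observing that every drawing counted for $\segthree$ already qualifies as a drawing counted for $\segx$. Concretely, I would take an optimal crossing-free straight-line drawing $\delta$ of~$G$ in 3D, so that $\seg(\delta) = \segthree(G)$. Since $\delta$ is crossing-free, any two edges meet at most in a common endpoint; in particular, no two edges overlap and no edge contains in its interior a vertex to which it is not incident. Thus $\delta$ satisfies exactly the structural condition required by the definition of the crossing segment number (where crossings are merely permitted, not required), and hence $\delta$ lies in the family of drawings over which $\segx(G)$ is minimized. This immediately yields $\segx(G) \le \seg(\delta) = \segthree(G)$.

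The only point that needs checking is this inclusion of drawing classes: the family of crossing-free 3D straight-line drawings is a subset of the family of 3D straight-line drawings in which edges may cross but may neither overlap nor pass through non-incident vertices. The inclusion is immediate from the definitions, so no genuine obstacle arises, and the corollary is essentially a monotonicity statement (minimizing over a larger class can only decrease the optimum). Alternatively, one may route through Lemma~\ref{lem:projection}: applying it to the crossing-free drawing~$\delta$ produces a plane drawing~$\delta'$ with $\seg(\delta') \le \seg(\delta)$ that retains the no-overlap and no-vertex-in-interior property. This certifies the same bound $\segx(G) \le \segthree(G)$ and, as a bonus, confirms that the minimum defining~$\segx$ is already attained by a plane drawing, consistent with the remark following the definition of~$\segx$.
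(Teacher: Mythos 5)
Your proposal is correct, and it actually contains two arguments, one of which is the paper's. Your primary argument is pure monotonicity: under the paper's literal definition of \segx\ as a minimum over 3D straight-line drawings in which crossings are permitted (but overlaps and vertices in edge interiors are not), every crossing-free 3D drawing belongs to the feasible class for \segx, so the inequality is immediate and Lemma~\ref{lem:projection} is not needed at all. The paper instead states this as a corollary of Lemma~\ref{lem:projection}, and for a reason: elsewhere (the abstract, and notably the proof of Theorem~\ref{thm:hard}, where a \segx-optimal drawing $\Gamma$ is treated as a drawing in the plane whose crossing-freeness would witness $\segtwo(G)\le k$) the paper works with \segx\ as a \emph{2D} quantity. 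Under that reading, a crossing-free 3D drawing is not itself feasible for \segx, and one genuinely needs the projection of Lemma~\ref{lem:projection} to turn the \segthree-optimal drawing into a plane drawing with no overlaps and no vertices in edge interiors, at no cost in segments. Your alternative route in the last sentences is exactly this argument, so your proposal is robust to either reading of the definition; the monotonicity version is more elementary but is valid only for the 3D formulation, while the projection version is the one that makes the 2D and 3D formulations of \segx\ coincide, which is what the rest of the paper relies on.
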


\begin{proposition}
  \label{prop:T+fans}
  There is an infinite family of planar graphs $(\S_i)_{i\ge3}$ such
  that $\S_i$ has $n_i=i^3-i+6$ vertices and the ratios
  $\segtwo(\S_i)/\segthree(\S_i)$, $\segtwo(\S_i)/\segbend(\S_i)$, and
  $\segtwo(\S_i)/\segx(\S_i)$ all converge to~2 with increasing~$i$.
\end{proposition}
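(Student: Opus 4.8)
The plan is to build each $\S_i$ from a base tree together with attached \emph{fan} gadgets, designed so that the fans contribute $N_i = \Theta(i^3)$ degree-two ``corner'' vertices, each of whose two incident edges can be drawn collinearly (and hence merged into a single maximal segment) in 3D, in a crossing-free polyline drawing, and in a straight-line drawing with crossings, but is forced into a genuine bend (two distinct segments) in every crossing-free straight-line 2D drawing. The base tree is chosen so that it admits a drawing with $o(N_i)$ segments and so that the way the fans attach to it leaves essentially no freedom in the cyclic order around the relevant anchor vertices. Concretely I expect the fans to come in $\Theta(i^2)$ groups of size $\Theta(i)$ (or the reverse), so that their product is the leading $i^3$ term, with the tree and a few anchor vertices accounting for the lower-order corrections; the total is then tuned to $n_i = i^3 - i + 6$.

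For the three variants I would prove the upper bounds by exhibiting explicit drawings that straighten all $N_i$ corners simultaneously. For \segthree I use the third dimension: each corner path is placed collinearly, and the extra degree of freedom is used to route otherwise conflicting fans into distinct planes so that the drawing stays crossing-free; this gives $\segthree(\S_i) \le N_i + o(N_i)$, and by Corollary~\ref{cor:projection} the same bound transfers to \segx. Alternatively, for \segx one straightens every corner in the plane and simply tolerates the resulting crossings, which by construction are the only obstruction. For \segbend I draw in the plane, keep every corner straight, and spend the available bends on the \emph{tree} edges to reroute them around the fans, thereby avoiding crossings while leaving the $N_i$ corner segments intact. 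In all three cases the segment count is $N_i + o(N_i)$.

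The crux is the matching lower bound $\segtwo(\S_i) \ge 2N_i - o(N_i)$. Here I would argue that in any crossing-free straight-line 2D drawing of $\S_i$ all but $o(N_i)$ of the corner vertices must be \emph{bent}, i.e.\ their two incident edges are not collinear and therefore lie on two different maximal segments. The idea is that making a corner straight fixes a line through its two fan neighbours, and because the fans are nested and ordered around their anchors, a linear fraction of such fixed lines cannot coexist without forcing either two fan edges to overlap or two edges to cross. I would formalise this with a charging argument that assigns to each bent corner one ``extra'' segment endpoint beyond what a fully straightened drawing would need, lower-bound the number of bent corners combinatorially, and combine this with the trivial bound $\seg(\delta) \ge \eta(\S_i)/2$ to absorb the contribution of the base tree.

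Finally, combining $\segtwo(\S_i) \ge 2N_i - o(N_i)$ with the three upper bounds $N_i + o(N_i)$ and substituting $N_i = \Theta(i^3) = \Theta(n_i)$ shows that all three ratios converge to $2$ as $i \to \infty$, while a direct count of the tree and fan vertices verifies $n_i = i^3 - i + 6$. I expect the main obstacle to be precisely the \segtwo lower bound: ruling out \emph{every} clever straight-line drawing requires proving that the fans cannot be collinearised simultaneously without a crossing, which is a genuinely geometric statement about the forced non-collinearity of the corner vertices rather than a purely combinatorial one.
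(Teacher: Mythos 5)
Your proposal has the right arithmetic shape (upper bounds of $N_i+o(N_i)$ for the three relaxed variants against a lower bound of $2N_i-o(N_i)$ for \segtwo), and your upper-bound strategies (one plane per gadget in 3D, projection for \segx, bending the base edges for \segbend) are in the spirit of the paper's. But the crux---the \segtwo lower bound---is a genuine gap, and not merely because you leave it unproven: the construction you sketch cannot support it. Anything attached to a base \emph{tree} at single anchor vertices hangs off cut vertices, and subgraphs attached at a cut vertex can be permuted and reflected arbitrarily in the cyclic order; by F\'ary-type results, any such embedding is then realizable by a crossing-free straight-line drawing. So your claim that the attachment ``leaves essentially no freedom in the cyclic order around the relevant anchor vertices'' is false for a tree base, and the planned ``nested and ordered fans'' argument cannot get off the ground. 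Worse, if your fan gadgets are paths (so that your corners are degree-2 vertices), the whole graph $\S_i$ is itself a tree, and by the result of Dujmovi\'c et al.~\cite{desw-dpgfs-CGTA07} every tree can be drawn in 2D with $\eta/2$ segments, i.e., with \emph{all} degree-2 vertices straightened simultaneously; since $\eta/2$ is also a lower bound for every variant, all four quantities coincide and the ratios are~1, not~2.

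The paper's proof supplies exactly the two ingredients you are missing, and both differ from your degree-2-corner mechanism. First, the base graph is not a tree but a \emph{triangulation} $\T_i$ of maximum degree~6 with $t_i=i^2-2i+3$ vertices (two triangular grids glued along their boundaries). Because every face of a triangulation is a triangle in every embedding, any 2D drawing has only three vertices on the outer face, and the $i$-fan attached to each of the $t_i-3$ inner vertices is trapped inside a single triangular face. Second, the savings mechanism is not straightening degree-2 corners but \emph{pairing the spokes through the apex}: the $i$ spoke edges of a fan at $v$ can, in 3D or with bends or with crossings, be grouped into $i/2$ segments each passing straight through $v$, so a fan costs only $i/2+3$ segments; inside a triangular face the angle at $v$ is less than $\pi$, so no two spokes can be collinear, and the fan is forced to use at least $i-3$ segments of its own. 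This is a local, purely angular forcing argument---no global statement about many fixed lines failing to coexist is needed. Summing over the inner vertices gives $\segtwo(\S_i)\ge(t_i-3)(i-3)=i^3-O(i^2)$, against $t_i(i/2+3)+O(i^2)=i^3/2+O(i^2)$ for the other three variants, which yields the ratio~2. If you want to salvage your approach, the lesson is that the base graph must have a rigid face structure (3-connectivity with bounded face sizes and bounded degree), and the forced cost must be located at a high-degree apex rather than at degree-2 vertices.
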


\begin{proof}
  We construct, for $i \ge 3$, a triangulation~$\T_i$ with
  maximum degree~6 and~$t_i=i^2-2i+3$ vertices (and, hence, $3t_i-6$
  edges and $2t_i-4$ faces), as follows.  Take two triangular grids of
  side length~$i-1$ (a single triangle is a grid of side length~1)
  and glue their boundaries, identifying corresponding
  vertices and edges.  Clearly, the result is a (planar) triangulation.
  Let~$s_i = \segtwo(\T_i)$.  Then, by the result of
  Dujmovi\'c et al.~\cite{desw-dpgfs-CGTA07}, $s_i \le 5t_i/2$.

  We assume that $i$ is even.  To each vertex~$v$
  of the triangulation, we attach an $i$-\emph{fan}, that is, a path
  of length~$i$ each of whose vertices is connected to~$v$.
  Let~$\S_i$ be the resulting graph, which has $n_i=t_i(i+2)$ vertices.

  \begin{figure}[tb]
    \begin{subfigure}[b]{.32\textwidth}
      \centering
      \includegraphics[page=1]{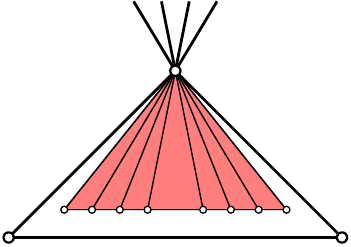}
      \caption{planar case}
      \label{fig:triangle+fan}
    \end{subfigure}
    \hfill
    \begin{subfigure}[b]{.32\textwidth}
      \centering
      \includegraphics[page=2]{triangle+fan}
      \caption{with crossings or in 3D}
      \label{fig:crossing-fan}
    \end{subfigure}
    \hfill
    \begin{subfigure}[b]{.32\textwidth}
      \centering
      \includegraphics[page=3]{triangle+fan}
      \caption{with bends}
      \label{fig:bend-fan}
    \end{subfigure}
    \caption{Attaching a fan (thin edges) to a vertex of a
      triangulation (thick edges) of maximum degree~6}
    \label{fig:triangulation}
  \end{figure}

  In 2D, no matter how the triangulation is drawn, only three vertices
  lie on the outer face.  Consider an $i$-fan incident to one of the
  $t_i-3$ inner vertices; see Fig.~\ref{fig:triangle+fan}.  Each such
  $i$-fan must be placed into a triangular face and needs at least
  $i-3$ segments that are disjoint from the drawing of the
  triangulation.  (Here we use that every vertex has degree at
  most~6.)  Hence, $\segtwo(\S_i) \ge (t_i-3)\cdot(i-3)=i^3-O(i^2)$.

  In 3D on the other hand, we can draw every fan in a plane different
  from the triangulation such that the fan's path lies on three
  segments and the remaining edges are paired such that each pair
  shares a segment; see Fig.~\ref{fig:crossing-fan}.  Hence,
  $\segthree(\S_i) \le t_i \cdot (i/2+3) + s_i = i^3/2 + O(i^2)$.  Due
  to Corollary~\ref{cor:projection}, $\segx(\S_i) \le
  \segthree(\S_i)$.

  To bound $\segbend(\S_i)$, observe that we can modify the layout of
  the triangulation as in Fig.~\ref{fig:bend-fan} such that every
  vertex is incident to an angle greater than~$\pi$ without any incoming
  edges.  This can be achieved as follows.  On each inner vertex~$v$,
  place a disk~$D_v$ whose radius is (slightly smaller than) the
  minimum over the lengths of the incident edges divided by~2 and over
  the distances to all non-incident edges.  The resulting disks have
  positive radii and are pairwise disjoint.  Now we go through all
  vertices.  Let~$v$ be the current vertex and let $\partial D_v$ be
  the boundary of~$D_v$.  We bend all edges incident to~$v$
  at~$\partial D_v$ and place~$v$ on some unused point on~$\partial
  D_v$.  As a result, every vertex is incident to an angle
  greater than~$\pi$ without any incoming edges.  In this area (marked red
  in Fig.~\ref{fig:bend-fan}), we can place the corresponding fan.
  The modification introduces at most two bends in every edge of the
  triangulation.  Hence,
  $\segbend(\S_i) \le t_i \cdot (i/2+3) + 3 \cdot (3t_i-6) =
  i^3/2+O(i^2)$.
\end{proof}

\begin{open} \sloppy
  What are upper bounds for the ratios $\segtwo(G)/\segthree(G)$,
  $\segtwo(G)/\segbend(G)$, and $\segtwo(G)/\segx(G)$ with $G$ ranging
  over all planar graphs?
\end{open}

\section{Computational Complexity}
\label{sec:hardness}

Chaplick et al.~\cite[Theorem~1]{ChaplickFLRVW16b} showed that it is
\ETR-hard to decide for a planar graph~$G$ and an integer~$k$ whether
$\rho^1_2(G) \le k$ and whether $\rho^1_3(G) \le k$.
We follow their approach to show the hardness of all
variants of the segment number that we study in this paper.

A \emph{simple line arrangement} is a set $\mathcal{L}$ of $k$ lines
in $\mathbb R^2$ such that each pair of lines has one intersection
point and no three lines share a common point. We define the
\emph{arrangement graph} for a set of lines as
follows~\cite{BoseEW03}: The vertices correspond to the intersection
points of lines and two vertices are adjacent in the graph if and only
if they lie on the same line and no other vertex lies between them.
The \AGR problem is to decide whether a given graph is the arrangement
graph of some set of lines.

Bose et al.~\cite{BoseEW03} showed that this problem is NP-hard by
reduction from a version of \PSTR for the Euclidean plane, whose
NP-hardness was proved by Shor \cite{Shor91}. It turns out that \AGR
is actually an \ETR-complete problem \cite[page
212]{Eppstein14}. This stronger statement follows from the fact that
the Euclidean \PSTR is \ETR-hard as well as the original
projective version \cite{Matousek14,s-cgtp-GD09}.

\begin{theorem}
  \label{thm:hard}
  Given a planar graph $G$ of maximum degree $4$ and an integer $k$, it is \ETR-hard
  to decide whether $\segtwo(G) \le k$, whether $\segbend(G) \le k$,
  and whether $\segx(G) \le k$.
\end{theorem}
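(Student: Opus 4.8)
The plan is to reduce from \AGR, which is \ETR-complete, by an essentially identity reduction, reusing the line-cover ideas of Chaplick et al.\ but replacing ``lines'' by ``maximal segments''. Given a candidate arrangement graph $G$, I would compute the integer $k$ forced by $|E(G)| = k(k-2)$ and $|V(G)| = \binom{k}{2}$ (rejecting to a fixed no-instance if $G$ fails the necessary counting/degree conditions), and output the instance ``$\segtwo(G) \le k$'' (and likewise ``$\segbend(G)\le k$'' and ``$\segx(G)\le k$''). I would restrict \AGR to candidate graphs of minimum degree~$3$ whose degree sequence is that of a simple arrangement of $k \ge 5$ lines in sufficiently general position (no intersection point extreme on both of its lines), so that $G$ has exactly $n_3 = 2k$ vertices of degree~$3$ and $n_4 = k(k-5)/2$ of degree~$4$; one checks that \AGR stays \ETR-hard under this restriction (adding a constant-size framing arrangement if necessary to kill doubly-extreme crossings). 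Correctness then rests on the single equivalence: $G$ is the arrangement graph of $k$ lines $\iff \seg_\star(G) \le k$, for each variant $\seg_\star \in \{\segtwo,\segbend,\segx\}$.

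For the forward direction I would draw $G$ as its line arrangement, clipping each of the $k$ lines to the segment between its two extreme intersection points. Consecutive intersection points on a line are joined by edges without gaps, so each line contributes a single inclusionwise-maximal segment, giving exactly $k$ segments; two lines meet only at vertices of $G$, so the drawing is crossing-free. It is therefore a simultaneous witness for $\segtwo$, $\segbend$, and $\segx$, proving each is at most~$k$.

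The heart of the proof is the matching lower bound and the analysis of equality. I would first record the counting identity $\seg(\delta) = |E(G)| - \sum_{v} p(v)$, where $p(v)$ denotes the number of straight-through (collinear) edge-pairs at $v$; this holds for any straight-line drawing with no overlapping edges and no vertex in the interior of an edge, crossings notwithstanding, since maximal straight chains are governed solely by collinearities at vertices (these hypotheses hold in all three variants). As $p(v) \le \lfloor d(v)/2\rfloor$ and $\sum_v \lfloor d(v)/2\rfloor = n_3 + 2 n_4 = k(k-3)$, we obtain $\seg(\delta) \ge |E(G)| - k(k-3) = k$, so $\segtwo(G),\segx(G) \ge k$. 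For $\segbend$, treating each of the $B$ bends as a straight-through--free degree-$2$ point yields $\seg(\delta) \ge k + B \ge k$, with equality forcing $B=0$. Hence in every variant, $\seg_\star(G)=k$ forces a bend-free, straight-line drawing with $p(v) = \lfloor d(v)/2\rfloor$ at every vertex.

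It remains to show that such a maximally-paired drawing realizes $G$ as a simple line arrangement, and I expect this to be the main obstacle, above all for $\segx$, where crossings are allowed and a priori nothing prevents two of the supporting lines from meeting away from the vertices. I would extend the $k$ segments to lines $L_1,\dots,L_k$. Maximal pairing forces every vertex to lie on exactly two of these lines, and on distinct ones (two collinear pairs at a degree-$4$ vertex; one pair plus one free edge, i.e.\ an endpoint, at a degree-$3$ vertex). The map sending each vertex to the unordered pair of lines through it is injective, since two distinct lines meet in at most one point; with $\binom{k}{2}$ vertices and at most $\binom{k'}{2}\le\binom{k}{2}$ pairs among the $k'\le k$ distinct supporting lines, the map is a bijection and $k'=k$. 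Consequently every pair of lines crosses (no parallels), every crossing is a vertex (ruling out the spurious non-vertex crossings that $\segx$ would otherwise permit), and no three lines are concurrent (degree $\le 4$); thus the arrangement is simple. Since each line then carries precisely its incident vertices as consecutive points, $G$ is exactly the arrangement graph of $L_1,\dots,L_k$. The three variants diverge only in how equality is attained -- crossing-free for $\segtwo$, bend-free for $\segbend$, crossing-allowed for $\segx$ -- and the bijection argument dispatches all of them uniformly, completing the reduction.
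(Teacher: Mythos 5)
Your reduction breaks at its very first step: the restricted version of \AGR that you reduce from has \emph{no yes-instances at all}. In every simple arrangement of $k\ge 3$ lines, consider the convex hull of the $\binom{k}{2}$ intersection points and let $p$ be any vertex of that hull, say $p=\ell_1\cap\ell_2$. All intersection points on $\ell_1$ lie inside the hull, and since $p$ is an extreme point of the hull, $p$ is an endpoint of the segment $\ell_1\cap\mathrm{hull}$; hence $p$ is extreme on $\ell_1$, and by the same argument extreme on $\ell_2$. So every simple line arrangement has at least three doubly-extreme intersection points, i.e., every arrangement graph has at least three degree-2 vertices. Your class of candidate graphs of minimum degree~3 (``no intersection point extreme on both of its lines'') is therefore empty of arrangement graphs, the restricted \AGR is trivially decidable, and no \ETR-hardness can be inherited from it. The suggested ``constant-size framing arrangement'' cannot repair this: whatever lines you add, the enlarged arrangement again has hull vertices, which are again doubly extreme (and, at the level of graphs, the framing is not even well defined, since how frame lines interleave with the others depends on the unknown realization).

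This is not a removable technicality, because your whole lower-bound and rigidity analysis rests on the identity $\sum_v\lfloor d(v)/2\rfloor = k(k-3)$, which fails as soon as degree-2 vertices exist. With $n_2$ degree-2 vertices the same computation yields only $\seg(\delta)\ge k-n_2$, and the equality analysis collapses: a degree-2 vertex may be drawn with its two edges collinear, merging two lines' worth of edges into a single segment, so a drawing with $k$ segments no longer forces maximal pairing or a bijection between vertices and pairs of lines. The paper circumvents exactly this obstacle in the opposite way: instead of restricting the arrangements, it modifies the \emph{graph}, appending one degree-1 tail to each degree-3 vertex and two tails to each degree-2 vertex of $G$. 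In the resulting graph $G'$ every one of the $\binom{k}{2}$ original vertices has degree~4, so in any admissible drawing each of them lies at an intersection of two segments, and $k$ segments admit at most $\binom{k}{2}$ intersection points; this restores your pigeonhole bound $\seg_\star(G')\ge k$ and the rigidity of the equality case (every segment intersection, in particular every bend or crossing, is a vertex), while the tails can be drawn on the $k$ arrangement lines extended beyond their extreme crossings, so the forward direction still needs only $k$ segments. Your counting identity and your extension-to-lines/bijection argument for the equality case are essentially sound and parallel the paper's reasoning, but they must be run on $G'$, not on $G$.
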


\begin{proof}
  Similarly to Chaplick et al.~\cite[proof of Theorem~1]{ChaplickFLRVW16b},
  we first observe that if $G$ is an arrangement graph, there must be an
  integer~$\ell$ such that $G$ has $\ell(\ell-1)/2$ vertices (of
  degree $d\in\{2,3,4\}$) and $\ell(\ell-2)$ edges.  This uniquely
  determines~$\ell$.  We set the parameter~$k$ from the statement of
  our theorem to this value of~$\ell$.
  Again, as Chaplick et al., we construct a
  graph~$G'$ from $G$ by appending a tail (i.e., a degree-1 vertex) to
  each degree-3 vertex of $G$ and two tails to each degree-2 vertex
  of~$G$.

We claim that the following five conditions are equivalent:
(i)~$G$ is an arrangement graph on $k$ lines,
(ii)~$\rho^1_2(G')\le k$, %
(iii)~$\segtwo(G')\le k$, %
(iv)~$\segbend(G')\le k$, and %
(v)~$\segx(G')\le k$. %
Once the equivalence is established, the \ETR-hardness of deciding~(i)
implies the \ETR-hardness of deciding any of the other statements.

Indeed, according to Chaplick et al.~\cite[proof of
Theorem~1]{ChaplickFLRVW16b},
$G$ is an arrangement graph if and only if $\rho^1_2(G')\leq k$, that
is, (i) and (ii)~are equivalent.

Assume~(i). If~$G$ corresponds to a line arrangement of~$k$ lines,
all edges of $G$ lie on these $k$ lines and the tails of $G'$ can be
added without increasing the number of lines. This arrangement shows
that $\segtwo(G')\le k$, that is, (i) implies~(iii).

Assume~(iii), i.e., $\segtwo(G')\le k$.  Then $\segbend(G')\le k$~(iv) and
$\segx(G')\le k$~(v).

Assume~(iv), i.e., $\segbend(G')\le k$.
Let~$\Gamma'$ be a polyline drawing of~$G'$ on $\segbend(G')$ segments.
The graph~$G'$ contains $\binom{k}{2}$ degree-4 vertices.
As each of these vertices lies on the intersection of two
segments in~$\Gamma'$, we need $k$ segments to get enough
intersections, that is, $\segbend(G')\ge k$.
Thus $\segbend(G')=k$
and each intersection of the segments
of $\Gamma'$ (in particular, each bend) is a vertex of~$G'$.
Therefore edges in $\Gamma'$ do not bend in interior points and
$\Gamma'$ witnesses that $\segtwo(G)\le k$.  Thus~(iv) implies~(ii).

Finally, assume~(v), i.e., $\segx(G')\le k$.  Let~$\Gamma$ be a
straight-line drawing with possible crossings on $\segx(G')$ segments.  Again, we
need $k$ segments to get enough intersections, that is,
$\segx(G')\ge k$.  Thus $\segx(G')=k$ and
each intersection of the segments of $\Gamma'$ is a vertex of~$G'$.
Therefore edges in $\Gamma'$ do not cross and  $\Gamma'$
witnesses that $\segtwo(G)\le k$.  Thus~(v) implies~(ii).

Summing up, (iii) implies (iv) and (v), which both imply (ii), which
implies~(i), which implies~(iii).  Hence, all statements are equivalent.
\end{proof}

\begin{theorem}
  \label{thm:three-hard}
  Given a graph $G$ of maximum degree $4$ and an integer $k$, it is \ETR-hard to decide
  whether $\segthree(G) \le k$.
\end{theorem}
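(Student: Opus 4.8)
The plan is to reduce the 3D segment number to the cases already handled in Theorem~\ref{thm:hard}, by reusing the same gadget graph~$G'$ and the same parameter $k=\ell$. Recall from that proof that $G'$ is obtained from a candidate arrangement graph~$G$ by attaching tails so that every vertex of the original graph has degree exactly~$4$ (the $\binom{k}{2}$ intersection vertices), and that the five conditions (i)--(v) were shown equivalent. My goal is to show that the single extra condition $\segthree(G')\le k$ slots into this equivalence. Concretely, I would prove that (iii), namely $\segtwo(G')\le k$, implies $\segthree(G')\le k$, and that $\segthree(G')\le k$ implies (i) or equivalently one of the conditions already in the chain. Once that is done, \ETR-hardness of deciding (i) transfers to deciding $\segthree(G')\le k$.

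The forward direction is immediate and mirrors the argument that (iii) implies (iv) and (v): any planar straight-line drawing on $k$ segments is in particular a 3D straight-line drawing on $k$ segments (just embed the $xy$-plane in $\mathbb{R}^3$), with no overlaps and no vertex in an edge's interior, so $\segtwo(G')\le k$ yields $\segthree(G')\le k$. The substantive direction is the converse: assume $\segthree(G')\le k$ and let $\Gamma$ be a crossing-free straight-line 3D drawing of~$G'$ realizing this number. Here I would invoke Corollary~\ref{cor:projection}, which gives $\segx(G')\le\segthree(G')\le k$. But the proof of Theorem~\ref{thm:hard} already established that $\segx(G')\le k$ implies (ii), hence (i). So the chain closes: $\segthree(G')\le k$ forces $G$ to be an arrangement graph on $k$ lines, and conversely (via (i) implies (iii) implies the 3D bound) an arrangement graph yields $\segthree(G')\le k$.

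I expect the main subtlety to be confirming that the counting argument carries through in the 3D (or crossing) setting exactly as in the planar case. The crux in Theorem~\ref{thm:hard} is that $G'$ contains $\binom{k}{2}$ degree-$4$ vertices, each of which must sit at the crossing of two distinct segments, and that fewer than $k$ segments cannot supply $\binom{k}{2}$ pairwise intersections; this forces $\segx(G')=k$ with every segment intersection being a genuine vertex, so no two edges actually cross and the drawing is effectively planar. Since this argument is purely combinatorial about how many pairwise segment intersections $k$ segments can provide, it is dimension-agnostic and applies verbatim to~$\segx$, which is precisely why routing through Corollary~\ref{cor:projection} works. The only thing to be careful about is that the degree-$4$ vertices genuinely require two crossing segments rather than being endpoints of four segments; but this is guaranteed because $G'$ arose from an arrangement graph in which each such vertex lies on two lines, and the tails attached in the construction ensure the relevant degree and collinearity conditions. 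Thus no new geometric work beyond the projection lemma is needed, and the proof reduces to assembling the implications already available.
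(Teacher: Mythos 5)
Your proof is correct, and it closes the equivalence loop more economically than the paper does. Both arguments share the same skeleton: reuse $G'$ and the parameter $k$ from Theorem~\ref{thm:hard}, and use Corollary~\ref{cor:projection} to get $\segx(G')\le\segthree(G')$, so that $\segthree(G')\le k$ pushes down to condition~(v) and hence to~(i). The difference is the other leg. The paper bounds $\segthree(G')$ from above by routing through line cover numbers: it cites Chaplick et al.\ for $\rho^1_2(G')=\rho^1_3(G')$, identifies $\rho^1_3(G')$ with $\segx(G')$ via Theorem~\ref{thm:hard}, and asserts $\segthree(G')\le\rho^1_3(G')$ ``by definition''. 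You instead use the genuinely definitional inequality $\segthree(G')\le\segtwo(G')$ (a crossing-free plane straight-line drawing is in particular a crossing-free 3D one) together with the implication (i)$\Rightarrow$(iii) already proved. Your route is more elementary and self-contained: it needs no facts about line cover numbers at all, and it sidesteps the paper's inequality $\segthree\le\rho^1_3$, which is \emph{not} a definitional fact for arbitrary graphs (a two-edge matching has $\rho^1_3=1$ but $\segthree=2$); for $G'$ it holds only because the counting argument forces every line of a cover to carry a single connected path of $G'$. What the paper's detour buys is a slightly stronger byproduct, namely $\segthree(G')=\segx(G')=\rho^1_3(G')$, tying the new segment numbers to the line cover number. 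One small correction to your final paragraph: the worry about degree-4 vertices needing to be genuine crossings rather than common endpoints of several segments is resolved by the counting itself, not by the provenance of $G'$: any point lying on at least two of $m$ pairwise non-overlapping segments---whether an interior crossing or a shared endpoint---is one of at most $\binom{m}{2}$ such points, which is all the argument in Theorem~\ref{thm:hard} uses.
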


\begin{proof}
  Chaplick et al.~\cite[proof of Theorem~1]{ChaplickFLRVW16b} argued
  that for the graph~$G'$ constructed in the proof of
  Theorem~\ref{thm:hard} above, it holds that
  $\rho^1_2(G')=\rho^1_3(G')$.  Then, by the proof of
  Theorem~\ref{thm:hard}, we have $\rho^1_3(G') = \segx(G')$.

  By definition, we immediately obtain
  $\segthree(G') \le \rho^1_3(G')$.  By
  Corollary~\ref{cor:projection}, we have that
  $\segx(G') \le \segthree(G')$.  Therefore,
  $\segx(G') = \segthree(G')$.  Together with the arguments in the
  proof of Theorem~\ref{thm:hard}, this implies the theorem.
\end{proof}

\begin{theorem}
  \label{thm:complete}
  Given a planar graph $G$ and an integer $k$, it is
  \ETR-complete to decide whether $\segtwo(G) \le k$, whether
  $\segthree(G) \le k$, whether $\segbend(G) \le k$, and whether
  $\segx(G) \le k$.
\end{theorem}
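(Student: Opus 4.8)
The plan is to establish \ETR-completeness by proving two directions: membership in \ETR and \ETR-hardness. The hardness is already done: Theorems~\ref{thm:hard} and~\ref{thm:three-hard} give \ETR-hardness for all four parameters (indeed even for bounded-degree graphs), so only the membership direction remains.

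For membership, the plan is to express each decision problem ``$\seg_\star(G) \le k$'' as an existentially quantified first-order sentence over the reals whose size is polynomial in the size of~$G$. First I would introduce real variables for the coordinates of the vertices: for the 2D variants (\segtwo, \segbend, \segx) two coordinates per vertex, and for \segthree three coordinates per vertex. The subtlety is that the number of segments, not edges, is what we bound, so I need to guess how the edges are grouped into maximal collinear segments. I would existentially quantify over a partition of the edge set into at most~$k$ groups (or, since the number of such partitions is finite, take a disjunction over all of them -- but to keep the formula polynomial, it is cleaner to encode the grouping by additional real or $\{0,1\}$-valued variables constrained by polynomial equalities, or simply to observe that for each fixed assignment of edges to $\le k$ segments one writes a separate \ETR-formula and the problem is the finite disjunction, which remains an \ETR-instance).

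Then I would write down the polynomial constraints. For each group of edges assigned to a common segment, I assert that all their endpoints are collinear (which for three points $p,q,r$ is the vanishing of the relevant $2\times 2$ or $3\times 3$ determinant, a polynomial equation in the coordinates). For the crossing-free variants (\segtwo, \segthree, \segbend) I additionally assert noncrossing: for every pair of independent edges I require that their drawn segments do not intersect, which is again a conjunction of polynomial inequalities (sign conditions on orientation determinants), and I require that no edge passes through a nonincident vertex. For \segx I drop the noncrossing constraints but keep the requirements that no two edges overlap and no edge contains a nonincident vertex in its interior, exactly matching the definition used in the paper. For \segbend, the polyline drawing means I also introduce variables for the bend points along each edge and count the segments of the resulting polylines rather than of single edges; the collinearity constraints then group the pieces of polylines. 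In every case the number of variables and the number of polynomial (in)equalities is polynomial in $|V(G)|$ and $|E(G)|$, so each decision problem lies in \ETR.

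The main obstacle is handling the segment-counting correctly: a segment is an \emph{inclusionwise maximal} collinear piece, so I must ensure the encoding does not accidentally count a single straight line through several collinear edges as several segments, nor conversely merge segments that are collinear but geometrically separated in a way the drawing style forbids. I expect to resolve this by quantifying over the assignment of (pieces of) edges to the $\le k$ segments and adding, for each segment, the collinearity constraints together with the drawing-style constraints; the finiteness of the number of assignments keeps everything within \ETR. Combining membership with the \ETR-hardness from Theorems~\ref{thm:hard} and~\ref{thm:three-hard} then yields \ETR-completeness for all four variants, proving the theorem.
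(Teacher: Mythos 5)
Your hardness direction is fine and identical to the paper's (it just cites Theorems~\ref{thm:hard} and~\ref{thm:three-hard}), so the content of the proof is the membership direction, and there your argument has two genuine gaps. First, the fallback you offer~-- ``for each fixed assignment of edges to $\le k$ segments one writes a separate \ETR-formula and the problem is the finite disjunction, which remains an \ETR-instance''~-- is not valid. There are $k^{|E(G)|}$ such assignments, so that disjunction has exponential size; membership in \ETR means a polynomial-time reduction to the existential theory of the reals, i.e., you must produce a \emph{single} existential formula of size polynomial in $|G|$ and $k$. Finiteness only gives decidability. The other option you mention, encoding the assignment by $\{0,1\}$-valued real variables $x_{e,s}$ constrained by $x_{e,s}(x_{e,s}-1)=0$ with guarded constraints of the form $x_{e,s}\cdot P=0$, does stay polynomial and is the version that can work; but the two options are not interchangeable, and the proof cannot lean on the exponential one.

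Second, and more seriously, the obstacle you correctly flag~-- that segments are \emph{inclusionwise maximal}~-- is never actually resolved: your ``resolution'' restates the plan and appeals again to the (invalid) finiteness argument. The missing idea is a concrete, polynomial-size constraint forcing each group of edges assigned to one segment to have a \emph{connected} union. Collinearity alone allows a group to consist of, say, two disjoint edges lying on one line: that is one group but two maximal segments, so a drawing certified by your formula can have far more than $k$ segments. In other words, as written your formula decides whether the drawing can be covered by $k$ lines, i.e., it decides $\rho^1_2(G)\le k$ (resp.\ $\rho^1_3(G)\le k$), not $\segtwo(G)\le k$~-- and these parameters genuinely differ in general; making them coincide is exactly why the hardness reductions need the special graph $G'$. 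The paper closes precisely this hole with one extra term: it reuses Chaplick et al.'s existential formula for the line cover number and adds the condition that any two vertices lying on a common line with no third vertex between them must be adjacent in $G$; this forces the part of the drawing on each line to be a single contiguous segment, so $k$ lines yield at most $k$ maximal segments. Your direct encoding needs an analogous per-group connectivity constraint (either this adjacency trick or a polynomial encoding of ``the edges assigned to group $s$ form a path of $G$''); with that added, and with the $\{0,1\}$-variable encoding, your route goes through and is even slightly more robust than the paper's (two collinear but separated segments simply land in different groups), but as stated it establishes membership for the wrong parameter.
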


\begin{proof}
  Given the hardness results in Theorems~\ref{thm:hard}
  and~\ref{thm:three-hard}, it remains to show that each of the four
  problems lies in \ETR.  Chaplick et al.~\cite[ArXiv version,
  Section~2]{ChaplickFLRVW16b} have shown that deciding whether
  $\rho_1^2(G) \le k$ and $\rho_1^3(G) \le k$ both lie in \ETR.  To
  this end, they showed that these questions can be formulated as
  first-order existential expressions over the reals.  We now show how
  to extend their expression for deciding whether~$\rho_1^2(G) \le k$
  to an expression for deciding whether $\segtwo(G) \le k$.  The
  expressions for the other variants can be extended in a similar way.

  Their existential statement over the reals starts with the
  quantifier prefix $\E v_1\ldots\E v_n\E p_1\E q_1\ldots \E p_k\E
  q_k$, where quantification $\E a$ over a point $a=(x,y)$ means the
  quantifier block $\E x\E y$, the points $v_1,\dots,v_n$ are the
  points to which the vertices of~$G$, $\{1,\dots,n\}$, are mapped,
  and the pairs $(p_1,q_1) \dots, (p_k,q_k)$ define the $k$ lines that
  cover the drawing of~$G$.  The expression~$\Pi$ over which they
  quantify uses a subexpression that takes as input three points in
  $\mathbb{R}^2$; for $a$, $b$, and~$c$, they define the expression
  $B(a,b,c)$ such that it is true if and only if~$a$ lies on the line
  segment~$\overline{bc}$.

  To the expression~$\Pi$ we simply add a term that ensures that, for
  each pair of consecutive points $v_i$ and $v_j$ on the same line,
  vertices~$i$ and~$j$ are adjacent in~$G$:
  \[\bigwedge_{l \in \{1,\dots,k\}, i,j,k \in V} B(v_i,p_l,q_l)
  \wedge B(v_j,p_l,q_l) \wedge \neg B(v_k,v_i,v_j) \Rightarrow \{i,j\}
  \in E,\]
  where $V$ is the vertex set and $E$ is the edge set of the graph $G$. 
\end{proof}

\section{Algorithms and Lower Bounds for Cubic Graphs}
\label{sec:cubic}

Consider a polyline drawing~$\delta$ of a cubic graph (in 2D or
3D).  Note that there are two types of vertices; those where exactly
one segment ends and those where three segments end.  We call these
vertices \emph{flat vertices} and \emph{tripods}, respectively.  Let
$f(\delta)$ be the number of flat vertices, $t(\delta)$ the
number of tripods, and $b(\delta)$ the number of bends in~$\delta$.

\begin{lemma}
  \label{lem:flat}
  For any straight-line drawing~$\delta$ of a cubic graph with $n$
  vertices, $\seg(\delta)=3n/2-f(\delta)+b(\delta)=n/2+t(\delta)+b(\delta)$.
\end{lemma}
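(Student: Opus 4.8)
The plan is to prove both equalities via a single double-counting argument over the endpoints of the maximal segments of~$\delta$, together with the obvious identity $f(\delta)+t(\delta)=n$.

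First I would justify that every vertex is either a flat vertex or a tripod, so that no vertex contributes a number of segment-ends other than~$1$ or~$3$. At a vertex~$v$ of the cubic graph three edges emanate, and the number of maximal segments ending at~$v$ equals~$3$ minus the number of collinear pairs among the three edge-directions at~$v$ (two edges leaving~$v$ collinearly form one maximal segment that passes through~$v$ as an interior point). Since two of the three edges cannot leave~$v$ in the same direction (they would overlap, which is forbidden), at most one such collinear pair can occur: either none (then three segments end at~$v$, a tripod) or exactly one (then one segment ends at~$v$, a flat vertex). Hence $f(\delta)+t(\delta)=n$.

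Next I would count, with multiplicity, the endpoints of the maximal segments of~$\delta$. Each maximal segment has exactly two endpoints, so this count equals $2\seg(\delta)$. On the other hand, every such endpoint sits either at a vertex or at a bend: a flat vertex is the endpoint of exactly one maximal segment, a tripod of exactly three, and at each bend precisely two maximal segments meet (the two pieces of the edge that turns there, which, being non-collinear, belong to distinct maximal segments). This yields
\[ 2\seg(\delta)=f(\delta)+3\,t(\delta)+2\,b(\delta). \]

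Finally I would substitute. Using $t(\delta)=n-f(\delta)$ gives $\seg(\delta)=3n/2-f(\delta)+b(\delta)$, and rewriting $3n/2-f(\delta)=n/2+t(\delta)$ gives the second form. The only delicate points—where I expect the argument to need care rather than real difficulty—are the two claims that a bend contributes exactly two endpoints and that a collinear pair passing through a flat vertex does not end a segment there; both follow from the inclusionwise maximality of the segments together with the fact that a bend is, by definition, a non-collinear turn interior to a single edge, so mere crossings of unrelated segments never merge them and never create spurious endpoints.
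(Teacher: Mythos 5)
Your proof is correct and follows essentially the same route as the paper's: count segment ends, obtaining $2\seg(\delta)=f(\delta)+3t(\delta)+2b(\delta)$, and combine with $n=f(\delta)+t(\delta)$. You simply spell out in more detail why every vertex is a flat vertex or a tripod and why a bend contributes exactly two ends, which the paper takes as given.
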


\begin{proof}
  Clearly, $n=f(\delta)+t(\delta)$.  The number of ``segment ends'' is
  $3t(\delta)+f(\delta)+2b(\delta)=3n-2f(\delta)+2b(\delta)=n+2t(\delta)+2b(\delta)$.  The claim
  follows since every segment has two ends.
\end{proof}

\subsection{Singly-Connected Cubic Graphs}

\begin{proposition}
  \label{prop:5n-over-6-example}
  There is an infinite family $(G_k)_{k \ge 1}$ of connected cubic
  graphs such that $G_k$ has $n_k=6k-2$ vertices and
  $\segtwo(G_k) = \segthree(G_k) = \segbend(G_k) = \segx(G_k) = 5k-1 =
  5n_k/6+2/3$.
\end{proposition}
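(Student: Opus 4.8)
The plan is to prove the four equalities by sandwiching every variant between a common lower and a common upper bound, both equal to $5k-1$. The workhorse is Lemma~\ref{lem:flat}: for a straight-line drawing of a cubic graph $\seg(\delta)=3n/2-f(\delta)$, and for a polyline drawing $\seg(\delta)=3n/2-f(\delta)+b(\delta)=n/2+t(\delta)+b(\delta)$. So the upper bound amounts to exhibiting a drawing with many flat vertices, while the lower bound amounts to proving that few vertices can be flat. First I would construct $G_k$ as a necklace of $k$ congruent gadgets linked by bridges into a connected, cubic, singly-connected graph, each gadget built around a triangle and the whole graph tuned to have exactly $n_k=6k-2$ vertices. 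The gadget is chosen so that, in the optimal drawing, two of its vertices are forced to be tripods while the remaining vertices can be made flat, so that in total $t(\delta)\ge 2k$ and $f(\delta)\le 4k-2$.

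For the upper bound I would give a single planar straight-line drawing $\delta^\ast$ of $G_k$ with $f(\delta^\ast)=4k-2$, hence $\seg(\delta^\ast)=3n_k/2-(4k-2)=5k-1$. The crucial point is that one planar straight-line drawing simultaneously certifies all four upper bounds: $\delta^\ast$ is a legal witness for $\segtwo$; having no bends it is a legal polyline drawing, so it bounds $\segbend$; regarded as lying in a plane of $\mathbb{R}^3$ it bounds $\segthree$; and since it has no crossings it is a fortiori an admissible drawing for $\segx$. Thus $\segtwo(G_k),\segbend(G_k),\segthree(G_k),\segx(G_k)\le 5k-1$ all follow from $\delta^\ast$.

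For the lower bound I would isolate two obstructions to flatness that hold in every drawing style at once. A flat vertex $v$ has two incident edges $vx,vy$ collinear with $v$ between $x$ and $y$. Obstruction~(a): if $x$ and $y$ are adjacent, the edge $xy$ would overlap the segment that passes straight through $x$, $v$, and $y$, which is forbidden, so the two ``through'' edges of a flat vertex may never be two edges of a common triangle. Obstruction~(b): following the through-pairings one never returns to the start, since a cycle of collinear through-edges would force all its edges onto one line and hence to overlap, so the through-pairings form open paths only. Both (a) and (b) are statements about which incident edges are collinear at a vertex, and are therefore insensitive to whether the drawing lives in the plane or in $\mathbb{R}^3$ and to whether crossings are allowed. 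I would design the gadget so that (a) and (b) together force at least two tripods per gadget, i.e.\ $t(\delta)\ge 2k$, hence $f(\delta)\le 4k-2$ and $\seg(\delta)\ge 5k-1$ for \emph{every} straight-line drawing. This single argument yields $\segtwo(G_k),\segthree(G_k),\segx(G_k)\ge 5k-1$ at once.

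The bend variant needs one extra idea, and this is where I expect the main obstacle. Here $\seg(\delta)=n/2+t(\delta)+b(\delta)$, so I must lower-bound $t(\delta)+b(\delta)$ rather than $t(\delta)$ alone: a bend can turn a would-be tripod into a flat vertex, and I must show that this never pays. The key observation is that flattening a forced tripod trades one tripod for at least one bend — the edge that would otherwise overlap must be rerouted — so the quantity $t+b$ is invariant under such an operation. Formally I would assign to each of the $2k$ forced obstruction sites either its tripod or a bend on one of its incident edges, injectively, and conclude $t(\delta)+b(\delta)\ge 2k$ and hence $\segbend(G_k)\ge 5k-1$. Making this assignment genuinely injective — ruling out that a single bend discharges two distinct sites — is the delicate step and the part I would write most carefully; it relies on the gadgets being separated by bridges so that their obstruction sites use pairwise disjoint edge sets. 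Combining the matching bounds then gives $\segtwo(G_k)=\segthree(G_k)=\segbend(G_k)=\segx(G_k)=5k-1=5n_k/6+2/3$.
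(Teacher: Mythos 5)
Your overall scaffolding (sandwich all four variants between a common upper bound from one planar drawing and a common lower bound via Lemma~\ref{lem:flat}) matches the paper, and your upper-bound half is fine. But the lower bound, which is the actual content of the proposition, has a genuine gap in two places. First, the gadget is never specified; ``a gadget built around a triangle, chosen so that two of its vertices are forced to be tripods'' is precisely the thing that has to be exhibited and proved, so the proposal defers the crux. Second, and more importantly, your obstructions~(a) and~(b) are \emph{not} sufficient to force two tripods per gadget. Take the paper's gadget $K_4'$ ($K_4$ with one subdivided edge, vertices $a,b,c,d$ and subdivision vertex $s$, with $s$ carrying the external edge). The assignment ``$b$ is the only tripod; $a$ flat through $(ac,ad)$, $c$ flat through $(ca,cs)$, $d$ flat through $(db,ds)$, $s$ flat through $(sc,\mathrm{ext})$'' violates neither (a) (no through-pair spans a triangle) nor (b) (no through-chain closes into a cycle): the chains are $\mathrm{ext}\hbox{-}s\hbox{-}c\hbox{-}a\hbox{-}d$ and $s\hbox{-}d\hbox{-}b$, both open paths. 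It is geometrically impossible only because the two chains share the \emph{two} vertices $s$ and $d$, forcing both onto one line and hence forcing overlaps --- a global collinearity argument that neither (a) nor (b) captures. So any gadget-based proof along your lines needs a third, stronger obstruction, and you would have to verify it for a concrete gadget; as written, the claimed bound $t(\delta)\ge 2k$ does not follow.

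The paper avoids all of this with a single convex-hull argument that also disposes of your admitted ``delicate step'' for \segbend. Its graph $G_k$ is a caterpillar whose $k$ leaves are replaced by copies of $K_4'$, each attached by exactly one edge (one connector per gadget --- note that your ``necklace'' would give each gadget two connectors and weaken the count). In any polyline drawing, in 2D or 3D, with or without crossings, the convex hull of the drawing of one copy of $K_4'$ has at least three extreme points; an extreme point cannot lie in the interior of any segment, so it is either a bend, a tripod, or possibly the connecting vertex (which may be flat via its external edge). Hence each gadget forces $t+b$ to increase by at least $2$, giving $t(\delta)+b(\delta)\ge 2k$ directly --- tripods and bends are counted together from the start, so no tripod-versus-bend exchange argument or injective discharging is needed, and disjointness across gadgets is automatic because extreme points of a gadget's drawing belong to that gadget. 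I would recommend you replace obstructions (a) and (b) by this extreme-point argument; it is both shorter and strictly stronger.
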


\begin{proof}
  Let $K_4'$ be the graph $K_4$ with a subdivided edge.  Consider the
  graph~$G_k$ depicted in Fig.~\ref{fig:5n-over-6-example} (for
  $k=4$).  It consists of a caterpillar with $k-2$ inner vertices (of
  degree~3) where each of the $k$ leaf nodes is replaced by a copy
  of~$K_4'$.  The convex hull of every polyline drawing of~$K_4'$ has
  at least three extreme points.  One of these points may connect~$K_4'$
  to~$G_k-K_4'$, but each of the remaining two must be a tripod or a bend.
  This holds for every copy of~$K_4'$.
  Hence, for any drawing~$\delta$ of~$G$, $t(\delta)+b(\delta)\ge 2k$.  Now
  Lemma~\ref{lem:flat} yields that $\seg(\delta) \ge 5k-1$.  For the
  drawing in Fig.~\ref{fig:5n-over-6-example}, the bound is tight.
\end{proof}

\begin{figure}[tb]
  \centering
  \includegraphics{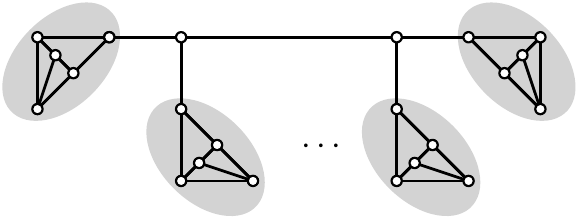}
  \caption{The graph $G_k$ (here $k=4$) is a caterpillar with $k-2$
    inner vertices of degree~3 where each leaf has been replaced by a
    copy of the 5-vertex graph~$K_4'$ (shaded gray).}
  \label{fig:5n-over-6-example}
\end{figure}

\subsection{Biconnected Cubic Graphs}

\begin{proposition}
  \label{prop:K33-cycle}
  There is an infinite family of Hamiltonian (and hence biconnected)
  cubic graphs $(H_k)_{k \ge 3}$ such that $H_k$ has $n_k=6k$
  vertices, $\segthree(H_k)=5k=5n_k/6$, and $\segx(H_k)=4k=2n_k/3$.
\end{proposition}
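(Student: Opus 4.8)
The plan is to build $H_k$ as a cyclic chain of $k$ copies of $K_{3,3}$. In each copy I delete one edge $a_1b_1$ (with $a_1,b_1$ on opposite sides of the bipartition) and then, for every~$i$, add an edge from $b_1$ of copy~$i$ to $a_1$ of copy~$i+1$ (indices modulo~$k$). The result is cubic with exactly $n_k=6k$ vertices and $9k$ edges; each copy contributes $K_{3,3}-a_1b_1$ together with one ``stub'' edge at each of $a_1$ and $b_1$. Concatenating the Hamiltonian path $a_1\,b_2\,a_2\,b_3\,a_3\,b_1$ of every copy through the new connecting edges produces a Hamiltonian cycle, so $H_k$ is Hamiltonian and hence biconnected.

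By Lemma~\ref{lem:flat}, every straight-line drawing~$\delta$ of~$H_k$ satisfies $\seg(\delta)=3k+t(\delta)$, so all four bounds reduce to controlling the number of tripods. For the two upper bounds I would give explicit drawings. In~$3$D I place the copies in $k$ parallel planes and draw each $K_{3,3}-a_1b_1$ crossing-free with only $2$ tripods (hence $4$ flat vertices), routing the connecting edges between consecutive planes; this gives $\segthree(H_k)\le 5k$, and $\segx(H_k)\le\segthree(H_k)$ by Corollary~\ref{cor:projection}. For the crossing variant I draw $H_k$ in the plane and make each stub edge \emph{collinear} with an incident interior edge, so that the connecting edge together with a $K_{3,3}$-edge forms a single through-segment at $a_1$ (resp.\ $b_1$); the unavoidable self-crossings of each copy are simply allowed. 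This leaves a single tripod per copy, giving $\segx(H_k)\le 4k$.

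The lower bounds rest on the observation that, since three collinear edges at one vertex would overlap, a flat vertex lies on exactly two supporting lines (its through-line and the line of its unique ending edge) while a tripod lies on three concurrent lines. \emph{Part~1 (valid in both models).} No copy can be entirely flat. Since $a_2,a_3,b_2,b_3$ carry no stubs, each of them must be made flat using only its three interior edges; a short case analysis over which pair is chosen at each vertex shows that any consistent choice forces four of the six vertices onto a common line, so that some edge would contain a non-incident vertex in its interior---forbidden in every model. Hence each copy contains at least one tripod, $t(\delta)\ge k$, and therefore both $\segx(H_k)\ge 4k$ and $\segthree(H_k)\ge 4k$.

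\emph{Part~2 (the main obstacle; crossing-free case).} To reach $\segthree(H_k)\ge 5k$ I must upgrade ``at least one tripod'' to ``at least two tripods'' per copy when crossings are forbidden. Observe first that each copy, together with the connecting path running through all the other copies, is a subdivision of~$K_{3,3}$ and is therefore non-planar. Assuming for contradiction that some copy has only one tripod---hence five flat vertices, including both $a_1$ and~$b_1$---I would use the five resulting collinearities to show that the configuration of that copy is pinned to a common plane, whereupon the non-planarity just noted forces two independent edges to meet in an interior point, i.e.\ a genuine crossing, impossible in a crossing-free $3$D drawing. The subtlety, and the reason the bound is strictly larger than in the crossing model, is precisely that in $3$D two supporting lines meet only when they share a vertex, so a priori the two stub lines could furnish the extra intersection point needed for a fifth flat vertex; the heart of the argument is to rule this out by showing that realizing those incidences cannot avoid the crossing dictated by the crossing number~$1$ of~$K_{3,3}$. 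Combining Parts~1 and~2 with the matching upper bounds yields $\segthree(H_k)=5k$ and $\segx(H_k)=4k$.
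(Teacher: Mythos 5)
Your graph $H_k$, the use of Lemma~\ref{lem:flat}, and your drawing for $\segx(H_k)\le 4k$ (stub edges collinear through $a_1$ and $b_1$, one tripod per copy) coincide with the paper's. The fatal gap is the construction you propose for $\segthree(H_k)\le 5k$. Suppose a copy is drawn crossing-free inside its own plane while both connecting edges leave that plane. The in-plane drawing is a planar embedding of $K_{3,3}$ minus an edge, so $a_1$ and $b_1$ cannot lie on a common face, and hence at most one of them lies on the outer face; since the convex hull of the copy has at least three extreme points, and an extreme point all of whose edges belong to the copy can never be flat, at least \emph{two non-connector} vertices are tripods. Your ``two tripods per copy'' would therefore force both connectors to be flat, and---because their stubs leave the plane---flat via their two in-plane edges, i.e.\ $b_2,a_1,b_3$ collinear and $a_2,b_1,a_3$ collinear. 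But then no non-connector can be flat at all: making $a_2$ flat via the pair $\{b_2,b_3\}$ puts $a_2$ on the segment $b_2a_1b_3$, so some edge contains a non-incident vertex ($a_1$ or $a_2$), while making $a_2$ flat via a pair containing $b_1$ puts $b_2$ (or $b_3$) on the line of $a_2b_1a_3$, so the edge $b_2a_3$ (or $b_3a_3$) passes through $b_1$ and $a_2$. Hence each copy in your layout has at least three tripods and your drawing has at least $6k$ segments; the parallel-planes idea cannot reach $5k$. The paper's drawing differs in exactly the decisive respect: it keeps the whole graph in one plane with the stubs collinear through the connectors (so $a_1,b_1$ stay flat), accepts one crossing per copy, and then lifts one interior vertex of each copy to the plane $z=1$; this removes the crossings at the cost of exactly one extra tripod per copy, giving $t(\delta)=2k$ and thus $5k$ segments.

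The lower bounds are also incomplete. For $\segthree(H_k)\ge 5k$ you present only a plan whose central step you yourself flag as open (``the heart of the argument is to rule this out''); nothing in the proposal actually excludes a copy with five flat vertices, so nothing is proved. Moreover, in Part~1 the claim that every consistent choice of through-pairs ``forces four of the six vertices onto a common line'' fails in the critical case: when $a_1,b_1$ are flat via their stubs, the four non-connectors can be assigned the complete-quadrilateral pattern $b_1a_2b_2$, $b_1a_3b_3$, $a_1b_2a_3$, $a_1b_3a_2$, which has no four collinear vertices; it is impossible for a subtler reason (this pattern prescribes, cyclically, which of the three intersection points on each of the four supporting lines is the middle one, whereas for any straight line the number of other lines whose middle point lies on it equals the number of sides it crosses of the triangle formed by the other three, hence is $0$ or $2$, never $1$). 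Both lower bounds follow cleanly from the paper's convex-hull argument, which you should adopt: every extreme point of the drawing of a copy that is not a connector is a tripod; in 2D there are at least three extreme points and at most two connectors, so $t(\delta)\ge k$ and $\segx(H_k)\ge 4k$; in crossing-free 3D, a non-coplanar copy has at least four extreme points, and a coplanar copy has at most one connector on its outer face (by the face argument above), so either way at least two tripods per copy, $t(\delta)\ge 2k$, and $\segthree(H_k)\ge 5k$.
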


\begin{proof}
  Consider the graph~$H_k$ depicted in Fig.~\ref{fig:K33-cycle} (for
  $k=4$).  It is a $k$-cycle where each vertex is replaced by a copy
  of a 6-vertex graph~$K$ ($K_{3,3}$ minus an edge).  The graph~$H_k$
  has $n_k=6k$ vertices and is not planar.

  In any 2D drawing of the subgraph~$K$, at least three vertices lie
  on the convex hull of the drawing of~$K$.  Two of these vertices may
  connect $K$ to $H_k-K$, but at least one of the convex-hull vertices
  is a tripod.  This holds for every copy of~$K$.  Hence, for any
  (3D) drawing~$\delta$ of~$H_k$, $t(\delta) \ge k$.  Now
  Lemma~\ref{lem:flat} yields that $\seg(\delta) \ge n_k/2+k=2n_k/3$.
  The same bound holds for~$\segx(H_k)$.

  In order to bound $\segthree(H_k)$ we consider two possibilities for
  the drawing of the subgraph~$K$; either it lies in a plane or it
  doesn't.  In the planar case, the two vertices that connect~$K$ to
  $H_k-K$ cannot lie in the same face of the planar embedding of~$K$
  (otherwise we could connect these two vertices without crossings,
  contradicting the fact that $K_{3,3}$ is not
  planar).  Hence, at least two vertices on the convex hull of~$K$
  must be tripods.  In the non-planar case, the convex hull consists
  of four vertices.  Two of these may connect~$K$ to $H_k-K$, but
  again at least two must be tripods.  In both cases we hence have
  $t(\delta) \ge 2k$ for any 3D drawing~$\delta$ of~$H_k$.
  Now Lemma~\ref{lem:flat} yields $\seg(\delta) \ge n_k/2+2k=5n_k/6$.
  The same bound holds for~$\segthree(H_k)$.

  For the drawing in Fig.~\ref{fig:K33-cycle}, the bound for \segx is
  tight.  Lifting the $k$ white vertices that do not lie on the outer
  face from the xy-plane ($z=0$) to the plane $z=1$, yields a
  crossing-free 3D drawing where the bound for \segthree is tight.
\end{proof}

\begin{figure}[tb]
  \begin{minipage}[b]{.48\textwidth}
    \centering
    \includegraphics{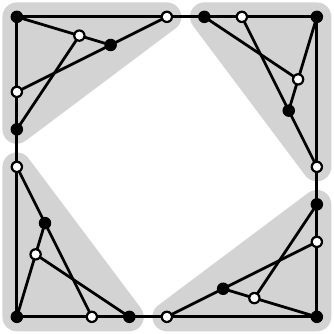}
    \caption{The cubic graph $H_k$ (here $k=4$) is a $k$-cycle whose
      vertices are replaced by $K_{3,3}$ minus an edge (shaded).
    }
    \label{fig:K33-cycle}
  \end{minipage}
  \hfill
  \begin{minipage}[b]{.48\textwidth}
    \centering
    \includegraphics{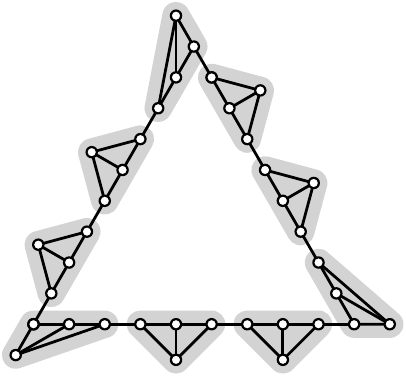}
    \caption{The planar cubic graph~$I_k$ (here $k=9$) is a $k$-cycle
      whose vertices are replaced by~$K_4$ minus an edge (shaded).
    }
    \label{fig:claw-cycle}
  \end{minipage}
\end{figure}

\begin{proposition}
  \label{prop:claw-cycle}
  There is an infinite family of planar cubic Hamiltonian (and hence
  biconnected) graphs
  $(I_k)_{k \ge 3}$ such that~$I_k$ has $n_k=4k$ vertices and
  $\segtwo(I_k)=\segthree(I_k)=\segbend(I_k)=\segx(I_k)=3k=3n_k/4$.
\end{proposition}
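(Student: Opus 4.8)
The plan is to pin down all four parameters to the common value $3k$ by proving a matching lower bound and exhibiting a drawing with $3k$ segments. I would first record the relationships among the four parameters: every crossing-free straight-line plane drawing is simultaneously a valid drawing for all four notions, so $\segthree(I_k),\segbend(I_k),\segx(I_k)\le\segtwo(I_k)$; together with $\segx(I_k)\le\segthree(I_k)$ from Corollary~\ref{cor:projection}, this makes \segtwo the largest and reduces the work to (a)~an upper bound $\segtwo(I_k)\le 3k$ and (b)~lower bounds $\segx(I_k)\ge 3k$ (which then propagates up to $\segthree$ and $\segtwo$) and $\segbend(I_k)\ge 3k$ (which needs its own argument, as $\segbend$ is not above $\segx$ in this order).

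For the lower bound on \segx I would invoke Lemma~\ref{lem:flat}: since $n_k=4k$, any drawing~$\delta$ satisfies $\seg(\delta)=2k+t(\delta)+b(\delta)$, so it suffices to show $t(\delta)+b(\delta)\ge k$. The $k$ copies of $K=K_4-e$ (that is, $K_4$ with one edge deleted) are vertex-disjoint; each has two degree-$2$ attachment vertices (the endpoints of the missing edge, here the cycle connectors) and two internal degree-$3$ hubs. For a straight-line drawing ($b=0$) I argue per copy: its four vertices span a convex hull with at least three extreme points, at most two of which are connectors, so some hub~$v$ is an extreme point; as all three neighbours of~$v$ lie inside the hull, the three straight edges at~$v$ enter a cone of angle $<\pi$ at~$v$, hence no two of them are collinear and~$v$ is a tripod. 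This charges one tripod to each copy, giving $t(\delta)\ge k$ and $\seg(\delta)\ge 3k$; since $\segx\le\segthree\le\segtwo$, the same bound holds for \segthree and \segtwo.

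The delicate case is \segbend, where an edge leaving~$v$ may start outside the cone and bend back. I would extend the argument: at the same extreme hub~$v$, if the three initial segments all point into the closed cone then~$v$ is a tripod as before; otherwise some incident edge begins with a segment leaving the cone, yet its other endpoint lies inside the (convex) cone, so the straight connection would stay inside the cone and the edge must therefore bend. Either way one unit is charged to $t(\delta)+b(\delta)$ in that copy, and since~$v$ is a hub, the charged tripod or bend sits on vertices/edges internal to the copy, so charges over distinct copies are disjoint and $t(\delta)+b(\delta)\ge k$. The genuinely fiddly point, which I expect to be the main obstacle, is the degenerate configuration in which the four vertices of a copy are collinear and the cone collapses; here I would instead use that each copy contains a triangle, which cannot be drawn with three collinear vertices and non-overlapping straight edges, forcing a bend on an internal edge and again charging one unit. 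Making these degeneracies fit a single clean statement is the part that requires the most care.

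Finally, for the upper bound I would give an explicit planar straight-line drawing with $3k$ segments. Draw the apex-hubs $a_1,\dots,a_k$ as the corners of a convex $k$-gon so that each polygon edge $a_ia_{i+1}$ is a single maximal segment through the connector $d_i$ (near $a_i$) and the connector $c_{i+1}$ (near $a_{i+1}$); near each corner $a_i$ cut off a small triangle by the chord $c_id_i$, place the other hub $b_i$ on this chord, and draw the edge $a_ib_i$. Then in every copy $a_i$ is a tripod while $b_i,c_i,d_i$ are flat, so $t=k$, $b=0$, $f=3k$, and Lemma~\ref{lem:flat} gives $\seg=2k+k=3k$. This drawing is planar and crossing-free, witnessing $\segtwo(I_k)\le 3k$ and closing the chain of equalities. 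Along the way I would verify the stated structural properties of $I_k$ (cubic, planar, and Hamiltonian via the cycle $c_1a_1b_1d_1c_2\cdots d_kc_1$, hence biconnected).
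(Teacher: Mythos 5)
Your proposal is correct and follows essentially the same route as the paper: the same family (a $k$-cycle of $K_4$-minus-an-edge gadgets), an explicit $3k$-segment convex-polygon drawing for the upper bound, and a per-copy convex-hull extreme-point argument combined with Lemma~\ref{lem:flat} for the lower bound. The only real difference is presentational: the paper takes extreme points of the entire drawn copy (including bend points), so an extreme non-connector point is immediately a tripod or a bend, which compresses your separate cone analysis and collinear degenerate case for \segbend into a single uniform sentence.
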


\begin{proof}
  Consider the graph~$I_k$ depicted in Fig.~\ref{fig:claw-cycle} (for
  $k=9$).  It is a $k$-cycle where each vertex is replaced by a copy
  of the graph~$K'$, which is $K_4$ minus an edge.  Therefore, $I_k$
  has $4k$ vertices.  The depicted drawing consists of~$3k$ segments.
  This yields the upper bounds.

  Concerning the lower bounds, note that, in any drawing style, each
  subgraph~$K'$ has an extreme point %
  not connected to $I_k-V(K')$.  This %
  point must be a tripod or a bend.  Hence, in any drawing~$\delta$
  of~$I_k$, $t(\delta) + b(\delta) \ge k$ and, by
  Lemma~\ref{lem:flat}, $\segtwo(I_k) = \segthree(I_k) = \segbend(I_k)
  = \segx(I_k) \ge 2k+t(\delta)+b(\delta) \ge 3k$.
\end{proof}

\begin{theorem}
  \label{thm:planar-bi-cubic-bend}
  For any biconnected planar cubic graph~$G$ with $n$ vertices, it
  holds that $\segbend(G) \le n+1$.  A corresponding drawing can be
  found in linear time.
\end{theorem}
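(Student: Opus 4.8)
The plan is to reduce the statement to a counting inequality via Lemma~\ref{lem:flat} and then to exhibit an explicit drawing. Since every vertex of a cubic graph is flat or a tripod, any crossing-free polyline drawing~$\delta$ of~$G$ satisfies $\seg(\delta)=n/2+t(\delta)+b(\delta)$, so it suffices to build a crossing-free polyline drawing with $t(\delta)+b(\delta)\le n/2+1$. I would aim for the extreme split $t(\delta)=0$ and $b(\delta)\le n/2+1$: that is, I would make \emph{every} vertex flat and use at most $n/2+1$ bends in total. Making every vertex flat amounts to choosing, at each vertex, two of its three incident edges to continue collinearly; this pairing decomposes $E(G)$ into exactly $n/2$ paths, each of which must be drawn with its internal vertices collinear. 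Drawing all these paths as straight segments would give $n/2$ segments and no bends, but this is of course not crossing-free in general; the bends are exactly what we spend to route the paths around one another, and the whole difficulty is to get away with only $n/2+1$ of them.

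The guiding construction is the Hamiltonian case. Suppose first that $G$ has a Hamiltonian cycle; since $G$ is planar and subcubic, the $n/2$ non-cycle edges (a perfect matching) together with one closing edge split into the two pages of a book embedding whose spine carries the vertices in cyclic order. I would place all vertices on a horizontal line in this order and draw the $n-1$ consecutive cycle edges as a single straight spine segment, so that every interior vertex is flat. At each of the two spine ends I pair the spine edge with one matching edge collinearly, keeping those two vertices flat as well, so $t(\delta)=0$. The remaining $n/2+1$ edges (the matching edges and the single ``wrap-around'' edge) are each drawn as a one-bend arc on the side prescribed by its page; nesting the arcs on each page keeps the drawing crossing-free. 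This uses exactly $b(\delta)=n/2+1$ bends, hence $\seg(\delta)=n+1$.

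For a general biconnected planar cubic graph, which need not be Hamiltonian, I would replace the Hamiltonian cycle by a spanning structure obtained from the $2$-connectivity of~$G$ -- concretely, an open ear decomposition (equivalently, a canonical or $st$-ordering of the plane graph). This should yield a spanning caterpillar-like ``spine'' that can be drawn with few collinear segments, together with an ordering of the vertices along a convex curve for which the non-spine edges nest into two pages, so that each is realizable with a single bend while respecting the fixed planar embedding. The hard part -- and the place where biconnectivity is essential -- will be to guarantee crossing-freeness \emph{simultaneously} with the tight budget of one bend per non-spine edge and flatness at the spine ends: a cut vertex would break the nesting of the remaining edges and force extra bends or tripods, which is precisely why the bound should degrade for merely connected graphs. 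I would finish by verifying that the whole construction (ear decomposition, vertex placement, and bend insertion) runs in linear time.
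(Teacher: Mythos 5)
Your counting framework (via Lemma~\ref{lem:flat}: make every vertex flat and spend at most $n/2+1$ bends) matches exactly what a correct drawing must achieve, and your Hamiltonian-case construction is essentially sound. But the proof has a genuine gap, and you name it yourself: the extension to general biconnected planar cubic graphs is left as a plan with an acknowledged ``hard part.'' This is not a routine detail. Non-Hamiltonian triconnected (hence biconnected) planar cubic graphs exist~-- the Tutte graph is the classical example~-- so the reduction to a Hamiltonian spine fails outright for infinitely many inputs. Worse, your budget is so tight that it essentially forces a Hamiltonian-path-like structure: with $3n/2$ edges and at most $n/2+1$ bends, at least $n-1$ edges must be drawn bend-free, and if these are to lie on a common spine with all vertices flat, consecutive vertices along the spine must be adjacent. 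Known two-page book embeddings of cubic planar graphs (Heath) do \emph{not} have this property~-- their spine order need not be a Hamiltonian path~-- so invoking book thickness does not rescue the argument, and no concrete replacement (via ear decompositions or $st$-orderings) is given. Finally, the linear-time claim is also broken by your approach: deciding Hamiltonicity of planar cubic (even triconnected) graphs is NP-complete (Garey, Johnson, and Tarjan), so even when a Hamiltonian cycle exists you cannot find your spine efficiently.

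For comparison, the paper sidesteps all of this by citing an existing linear-time algorithm of Liu et al.\ that draws every biconnected planar cubic graph other than $K_4$ \emph{orthogonally} with at most one bend per edge and at most $n/2+1$ bends in total. The segment count is then pure bookkeeping: in an orthogonal drawing of a cubic graph exactly one segment ends at each vertex (the other two incident edge-pieces are collinear, so every vertex is flat) and two segments end at each bend, giving at most $n+2(n/2+1)=2n+2$ segment ends, i.e., at most $n+1$ segments; $K_4$ is handled separately with five segments by bending one edge. In other words, the paper obtains exactly the configuration you were aiming for ($t(\delta)=0$, $b(\delta)\le n/2+1$), but gets the crossing-free layout and the linear running time for free from the orthogonal-drawing literature rather than constructing it from a Hamiltonian or ear-decomposition structure. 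If you want to salvage your approach, you would need to prove an analogue of that drawing result yourself, which is precisely the step your proposal leaves open.
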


\begin{proof}
  We draw~$G$ using the algorithm of Liu et al.~\cite{lmp-sbecr-AM94}
  that draws any planar biconnected cubic graph except the tetrahedron
  orthogonally with at most one bend per edge and at most $n/2+1$
  bends in total.  It remains to count the number of segments in this
  drawing.  In any vertex exactly one segment ends; in any bend
  exactly two segments end.  In total, this yields at most $n+2 \cdot
  (n/2+1)=2n+2$ segment ends and at most $n+1$ segments.

  Concerning the special case of the tetrahedron ($K_4$), note that it
  can be drawn with five segments when bending one of its six edges.
\end{proof}

\begin{open}
  What about 4-regular graphs?  They have $2n$ edges.  If we bend
  every edge once, we already need $2n$ segments~-- and not all
  4-regular graphs can be drawn with at most one bend per edge.
\end{open}

Every biconnected graph~$G$ admits an \emph{st-numbering},
that is, an ordering $\langle v_1,\dots,v_n \rangle$ of the vertex set
$\{v_1,\dots,v_n\}$ of~$G$ such that for every $j \in \{2,\dots,n-1\}$
vertex~$v_j$ has at least one predecessor (that is, a neighbor~$v_i$
with $i<j$) and at least one successor (that is, a neighbor~$v_k$ with
$k>j$).  Such a numbering can be computed in linear
time~\cite{et-cstn-TCS76}.
Given a cubic graph with an st-numbering
$\langle v_1,\dots,v_n \rangle$, we call a
vertex~$v_j$ with $j \in \{1,\dots,n\}$ a \emph{$p$-vertex} if it has
$p$ predecessors; $p \in \{0,1,2,3\}$.

\begin{lemma}
  \label{lem:12-vertices}
  Given a biconnected cubic graph with an st-numbering $\langle
  v_1,\dots,v_n \rangle$, there is one 0-vertex and one 3-vertex and
  there are $(n-2)/2$ 1-vertices and $(n-2)/2$ 2-vertices.
\end{lemma}

\begin{proof}
  Direct every edge from the vertex with smaller index to the vertex
  with higher index.  In the resulting directed graph, the sum of the
  indegrees equals the sum of the outdegrees.  Hence, the number of
  1-vertices (with indegree~1 and outdegree~2) and the number of
  2-vertices (with indegree~2 and outdegree~1) must be equal.
  It is obvious that there is one 0- and 3-vertex each.
\end{proof}

\begin{theorem}
  \label{thm:bi-cubic-3d}
  For any biconnected cubic graph~$G$ with $n$ vertices, $\segthree(G)
  \le n+2$ and $\segx(G) \le n+2$.
\end{theorem}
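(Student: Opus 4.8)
The plan is to bound only $\segthree(G)$, since Corollary~\ref{cor:projection} then gives the same bound for $\segx(G)$. For a straight-line drawing we have $b(\delta)=0$, so Lemma~\ref{lem:flat} reads $\seg(\delta)=n/2+t(\delta)$. Hence it suffices to produce a crossing-free straight-line drawing in 3D with at most $n/2+2$ tripods, i.e.\ with at least $n/2-2$ flat vertices.

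First I would fix the combinatorial structure using the st-numbering. Compute an st-numbering $\langle v_1,\dots,v_n\rangle$ and orient every edge from its lower- to its higher-indexed endpoint. By Lemma~\ref{lem:12-vertices} there are exactly $(n-2)/2$ $1$-vertices (in-degree~$1$, out-degree~$2$). I plan to make every $1$-vertex flat: at such a vertex~$v_j$ I pair its unique incoming edge with one of its two outgoing edges, declaring these collinear through~$v_j$, so that the third (other outgoing) edge ends a segment at~$v_j$. The remaining vertices -- the $0$-vertex~$v_1$, the $3$-vertex~$v_n$, and the $(n-2)/2$ $2$-vertices -- stay tripods. These choices give $f(\delta)=(n-2)/2$ and $t(\delta)=(n+2)/2$, so Lemma~\ref{lem:flat} yields $\seg(\delta)=n+1$; the spare unit up to $n+2$ is there to let one $1$-vertex remain a tripod if the realization below forces it. (Reversing the numbering would instead flatten the $2$-vertices; the two options are symmetric.)

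Next I would realize this structure in 3D by setting $z(v_j)=j$. Because every flat pairing joins an incoming to an outgoing edge, each inclusionwise-maximal collinear chain of edges is strictly increasing in~$z$; hence every segment is $z$-monotone (never horizontal), and along a segment $x$ and $y$ are affine functions of~$z$. Thus ``the chosen chains are collinear'' becomes a system of \emph{linear} equations in the vertex coordinates together with four coefficients per segment, and counting variables minus equations shows the solution space has dimension at least $n+2>0$; in particular it is nonempty. Inside this linear family I would move to general position: as the $z$-coordinates are distinct, no two vertices coincide, and each forbidden event -- a vertex in the interior of a non-incident edge, or two non-adjacent edges meeting -- cuts out a proper subvariety. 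A point avoiding these finitely many subvarieties is a crossing-free 3D straight-line drawing with $n/2+1$ tripods, so $\segthree(G)\le n+1\le n+2$ and $\segx(G)\le n+2$.

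The main obstacle is certifying that the collinearity solution space is not entirely contained in this ``bad'' locus, i.e.\ that at least one collinear realization is crossing-free. This is precisely where three dimensions matter: the extra dimension lets $z$-monotone segments pass over and under one another, so that two segments whose $z$-ranges overlap need not meet -- an escape unavailable in~2D, consistent with the larger planar bounds. Concretely I would give the segments generic affine directions, so that any two meet at most once and, generically, only at a prescribed shared vertex or outside their common $z$-range; the delicate bookkeeping is at the tripods, where up to three segments must pass through one point. To keep the linear system consistent there, I would choose at each $1$-vertex which outgoing edge continues its chain so that no $2$-vertex (nor~$v_n$) is the forced common top endpoint of two already-determined chains -- a system-of-distinct-representatives condition on the outgoing edges of the $1$-vertices. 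This is the single place where one unavoidable collision may cost an extra tripod, which is exactly what the additive~$2$ in the bound accommodates.
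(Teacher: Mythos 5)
Your high-level plan (fix the flat/tripod pattern combinatorially via the st-numbering, then realize it by a linear collinearity system plus a genericity argument) is genuinely different from the paper's, but it has a gap at its core, and you have located it yourself: the claim that each ``bad event'' cuts out a \emph{proper} subvariety of the solution space cannot be obtained from a dimension count --- it can simply be false. Take $G=K_4$ with st-numbering $\langle v_1,v_2,v_3,v_4\rangle$. The only $1$-vertex is $v_2$, and whichever outgoing edge you pair with the incoming edge $v_1v_2$, the resulting chain ($v_1\!-\!v_2\!-\!v_3$ or $v_1\!-\!v_2\!-\!v_4$) contains both endpoints of an edge of $K_4$ that is not on the chain; every solution of the linear system then has that edge overlapping the chain, so the bad locus is the \emph{entire} solution space. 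Indeed $\segthree(K_4)=6=n+2$: no vertex of $K_4$ can be flat in any overlap-free drawing, so your intermediate target of $n+1$ segments is unachievable there, not merely hard to realize. The same obstruction appears whenever a chosen chain contains both endpoints of a chord (directed triangles are the simplest case), so your approach hinges on a global combinatorial statement you never establish: that a pairing exists in which no chain contains both endpoints of a non-chain edge and no two chains share two vertices.

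The fallback you build in (``one unavoidable collision may cost an extra tripod'') is also not a proof: to get $n+2$ from Lemma~\ref{lem:flat} you need $f(\delta)\ge n/2-2$, i.e., you may sacrifice at most \emph{one} of the $(n-2)/2$ $1$-vertices, and nothing in the proposal bounds the number of infeasible configurations by one --- a cubic graph can contain many triangles, and your system-of-distinct-representatives condition about common top endpoints at $2$-vertices does not even address the chord-on-a-chain problem (two chains ending at a common tripod is perfectly fine; sharing \emph{two} vertices is what is fatal). The paper sidesteps this global consistency issue entirely: it places the vertices one by one in st-order (x-coordinate roughly $j$ for $v_j$), maintains the invariant that the lines used so far are pairwise skew or meet only at already-placed vertices (which makes every genericity claim local and immediate), and obtains $f(\delta)\ge n/2-2$ not by prescribing in advance which vertices are flat but by a charging argument: every vertex other than $v_1$ and $v_n$ pays a coin to a predecessor that is flat or equal to $v_1$, and $v_1$ and flat vertices absorb at most three and two coins, respectively. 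To salvage your route, the missing combinatorial lemma about pairings (existence, plus the ``at most one sacrifice'' bound) is what you would have to prove; as written, the argument does not go through.
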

\begin{proof}
  We show that $\segthree(G) \le n+2$.  Then
  Corollary~\ref{cor:projection} yields $\segx(G) \le n+2$.  For two
  different points $x$ and $y$ in $\mathbb{R}^3$, we denote the line
  that goes through~$x$ and~$y$ by~$xy$.

  Let $\langle v_1,\dots,v_n \rangle$ be an st-numbering of~$G$.  We
  construct a drawing~$\delta$ of~$G$, going through the vertices
  according to the st-numbering and using x-coordinate~$j \pm
  \varepsilon$ for vertex~$v_j$, where $0<\varepsilon \ll 1$.  We
  place~$v_1$ at $(1,1,1)$.  At every step $j=2,\dots,n$, we maintain a
  set~$\mathcal{L}$ of lines that are directed to the right such that any
  two lines in~$\mathcal{L}$ are either skew (that is, they don't lie
  in the same plane) or they intersect and their unique intersection
  point is the location of a vertex~$v_k$ with $k \le j$ (that is, the
  intersection point is~$v_j$ or it lies to the left of~$v_j$).
  Initially, $\mathcal{L}$ is empty.

  If~$v_j$ is a 1-vertex, we differentiate two cases depending on the
  unique predecessor~$v_i$ of~$v_j$.

  \smallskip

  \noindent
  \textbf{Case~I:} If~$v_i$ is the last vertex on a line~$\ell$
  in~$\mathcal{L}$, we place~$v_j$ on the intersection point of~$\ell$
  with the plane $x=j$.  In this case, the set~$\mathcal{L}$ doesn't
  change.

  \smallskip

  \noindent
  \textbf{Case~II:} Otherwise, we place~$v_j$ in the plane $x=j$
  such that the line~$v_iv_j$ is
  skew with respect to all lines in~$\mathcal{L}$ except for the
  line~$\ell$ that contains~$v_i$ and the unique predecessor of~$v_i$.
  (Note that the predecessor of~$v_i$ and the line~$\ell$ don't exist
  if $i=1$.)
  Clearly, $v_iv_j$ and~$\ell$ intersect in~$v_i$ and $i<j$.  Hence,
  we can add the line $v_iv_j$ to the set~$\mathcal{L}$.

  \smallskip

  If~$v_j$ is a 2-vertex, let~$v_i$ and~$v_{i'}$ be the two
  predecessors of~$v_j$.  Again, we consider two cases.

  \smallskip

  \noindent
  \textbf{Case~I':} At least one of $v_i$ or $v_{i'}$ is flat (that is, it lies
  on an inner point of the segment created by its incident edges that
  have already been drawn) or one of them is the vertex $v_1$.

  In this case, we treat~$v_j$ similarly as in Case~II
  above; we make sure that the lines~$v_iv_j$ and~$v_{i'}v_j$ are skew
  with respect to all lines in~$\mathcal{L}$ except that~$v_iv_j$
  won't be skew with respect to the at most two lines that
  connect~$v_i$ to its predecessors and~$v_{i'}v_j$ won't
  be skew with respect to the at most two lines that connect~$v_{i'}$
  to its predecessors.  Note that~$v_iv_j$ intersects any
  line through~$v_i$ and its neighbors in~$v_i$, and it holds that $i<j$.
  Similarly, $v_{i'}v_j$ intersects any line through~$v_{i'}$ and its
  neighbors in~$v_{i'}$, and it holds that $i'<j$.  The lines $v_iv_j$
  and~$v_{i'}v_j$ intersect in~$v_j$.  Hence, we can add the
  lines~$v_iv_j$ and~$v_{i'}v_j$ to the set~$\mathcal{L}$.

  \smallskip

  \noindent
  \textbf{Case~II':} Both $v_i$ and $v_{i'}$ are the last vertices on their
  lines~$\ell$ and~$\ell'$, respectively.

  If one of them, say $v_i$, has a successor~$v_k$ with $k>j$, we
  extend the line~$\ell$ of~$v_i$ and put $v_j$ on the intersection
  of~$\ell$ and the plane~$x=j$.

  Otherwise $v_i$ has a successor $v_k$ with $k<j$ and $v_{i'}$ has a
  successor $v_{k'}$ with $k'<j$, which both don't lie on the
  lines~$\ell$ and~$\ell'$.
  In this case, we put~$v_j$ on one of $\ell$ and $\ell'$, say~$\ell$,
  and add the line $v_{i'}v_j$ to the set $\mathcal{L}$.  Now we pick
  some $0<\varepsilon \ll 1$ such that we can place~$v_j$ at the
  intersection of~$\ell$ and $x=j+\varepsilon$.  We must avoid to
  place~$v_j$ on a plane spanned by any two non-skew lines
  in~$\mathcal L$ (intersecting to the left of $x=j$).  With this
  trick, the invariant for $\mathcal L$ still holds since the new line
  in~$\mathcal L$, $v_{i'}v_j$, intersects only~$\ell'$ (in $v_{i'}$,
  hence to the left).

  \smallskip

  Finally, we place~$v_n$ (which is a 3-vertex)
  at a point in the plane $x=n$ that does not lie on any of the lines
  spanned by pairs and planes spanned by triples of previously placed vertices.

  This finishes the description of the drawing~$\delta$ of~$G$.  Due to
  our invariant regarding the set~$\mathcal{L}$, no two edges
  of~$G$ intersect in~$\delta$.

  To bound the number of segments in~$\delta$, we use a simple
  charging argument.  Each non-first and non-last vertex~$v$ has a predecessor
  which is a flat vertex or~$v_1$. To this predecessor~$v$ pays a coin.  On
  the other hand, $v_1$ receives at most three coins and every flat
  vertex receives at most two coins.  Hence, $f(\delta) \ge (n-5)/2$.
  Since $n$ is even, $f(\delta) \ge n/2-2$.  Now, Lemma~\ref{lem:flat}
  yields the claim.
\end{proof}

\subsection{Triconnected Cubic Graphs}
\label{sub:triconnected}

\begin{proposition}
  \label{prop:tri-cubic-3d}
  There is an infinite family of triconnected cubic graphs $(F_k)_{k
    \ge 4}$ such that~$F_k$ has $n_k=5k$ vertices and
  $\segthree(F_k)=3.5k=7n_k/10$.
\end{proposition}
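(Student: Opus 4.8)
The plan is to prove the claimed equality for one explicit family by establishing the two inequalities separately, using Lemma~\ref{lem:flat} as the bridge between the segment count and the number of tripods (recall that a 3D straight-line drawing has no bends, so Lemma~\ref{lem:flat} reads $\seg(\delta)=n/2+t(\delta)$). For the construction I would start from any triconnected cubic graph $B_k$ on $k$ vertices ($K_4$ for $k=4$, a prism otherwise; since a cubic graph has an even number of vertices and $n_k=5k$, the index $k$ is necessarily even) and replace every vertex of $B_k$ by a copy of $K_{2,3}$, letting the two degree-$3$ vertices of the gadget (its \emph{centers}) remain internal while the three degree-$2$ vertices (its \emph{terminals}) inherit the three edges formerly incident to the replaced vertex. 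The resulting graph $F_k$ is cubic on $5k$ vertices. To see that $F_k$ is triconnected I would check that it has no edge cut of size at most~$2$ and then invoke the fact, recalled in the paper, that edge- and vertex-connectivity coincide for cubic graphs; the only point needing attention is that deleting the two centers of a gadget does not separate its terminals, since each terminal still reaches the connected graph $B_k$ minus one vertex through its external edge.

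For the lower bound I would show that every 3D straight-line drawing~$\delta$ of $F_k$ contains a tripod inside each of the $k$ gadgets; as the gadgets are vertex-disjoint this yields $t(\delta)\ge k$, and Lemma~\ref{lem:flat} gives $\seg(\delta)\ge n_k/2+k=3.5k$, hence $\segthree(F_k)\ge 3.5k$. Fix a gadget $W$ and consider the convex hull of the five points representing~$W$. The key observation is that a flat vertex lies in the interior of the segment joining two of its neighbors, so a flat \emph{extreme} point of $\operatorname{conv}(W)$ must have a neighbor outside~$W$; as only terminals carry an edge leaving~$W$, every flat extreme point is a terminal. When the five points affinely span $\mathbb{R}^3$, the hull is a polytope with at least four vertices while only three terminals are available, so at least one extreme point is a tripod.

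The delicate case---and the reason for choosing $K_{2,3}$---is when $W$ is drawn in a plane. If a center is an extreme point of the planar hull it must be a tripod, since all its neighbors lie in~$W$ and a flat extreme point would sit strictly between two points of~$W$. Otherwise the hull is exactly the triangle on the three terminals, and each center, whose only neighbors are the three corners, can be flat only by lying in the relative interior of a triangle edge. Two centers cannot lie on the same edge without forcing two edges to overlap, so if both centers were flat they would lie on distinct edges; but then the two edges joining the centers to their respective opposite corners would have to cross, contradicting that~$\delta$ is crossing-free. Hence at most one center is flat and the other is a tripod. Either way~$W$ contains a tripod, which completes the lower bound.

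For the matching upper bound I would exhibit a single crossing-free 3D straight-line drawing of $F_k$ in which exactly one vertex of each gadget is a tripod and the other four are flat; Lemma~\ref{lem:flat} then gives $\seg=n_k/2+k=3.5k$, so $\segthree(F_k)\le 3.5k$. Concretely I would draw each gadget so that four of its five vertices are flat (grouping the incident edges into collinear pairs accordingly) and place the $k$ gadgets in sufficiently generic position that the external edges meet nothing unintended. I expect the main obstacle of the whole proof to be exactly the planar case of the lower bound above, where \emph{crossing-freeness}, rather than convexity alone, is what rules out two flat centers; a secondary, more routine obstacle is verifying that the explicit optimal drawing is genuinely crossing-free once the within-gadget collinearities have been imposed.
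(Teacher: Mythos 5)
Your construction and your lower bound are essentially correct. The lower-bound argument follows the same skeleton as the paper's (split on whether the five gadget points are coplanar; observe that a flat extreme point of the gadget's hull must be a terminal, since a flat vertex with both collinear neighbors inside the gadget cannot be extreme), and in the coplanar subcase your cevian argument---two flat centers would have to lie on distinct sides of the terminal triangle, whence the edges to their opposite corners must cross---is a valid, purely geometric alternative to the paper's argument, which instead adds a vertex outside the hull joined to the three terminals and contradicts the non-planarity of $K_{3,3}$. (Two small omissions: rule out the degenerate case that the five points are collinear, which would force overlapping edges; and the paper also does not verify triconnectivity of $F_k$, so your sketch there is at least as careful.)

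The genuine gap is the upper bound, which you classify as the routine part; it is in fact where the paper's one essential idea lives, and your proposal omits it. To get $\segthree(F_k)\le 3.5k$ from Lemma~\ref{lem:flat} you need a crossing-free 3D drawing in which each gadget has four flat vertices. At most one terminal per gadget can be flat by lying in the interior of the segment between the two centers (two such terminals would put two edges on one line, overlapping), so at least two terminals per gadget must be flat by having their \emph{external} edge collinear with an internal gadget edge. Thus the line supporting an edge of $B_k$ is prescribed by the gadget geometry at one, and for some edges at both, of its endpoints, and these prescriptions must agree. The gadget placements are therefore globally coupled: ``sufficiently generic position'' is incompatible with the very collinearities you impose, and genericity is the only mechanism your sketch offers for crossing-freeness. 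The paper resolves exactly this coupling with Steinitz's theorem: draw $G_k$ as the $1$-skeleton of a convex $3$-polytope, place each gadget locally at its (simple) polytope vertex with the tripod at that vertex, the terminals on the three incident polytope edges (whose directions are linearly independent), and the second center on the segment between two terminals. Then every external edge is a sub-segment of a polytope edge, the required collinearities hold automatically at both ends, and crossing-freeness follows from convexity of the polytope together with the locality of the gadgets. Without this, or an equivalent global alignment mechanism, your upper bound does not go through.
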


\medskip
\noindent\emph{Proof.}~
  Let~$G_k$ be an arbitrary triconnected cubic graph with $k$ vertices
  ($k$ even).  By Steinitz's theorem,
  there exists a drawing of the graph $G_k$ as a 1-skeleton of a 3D
  convex polyhedron. Replace each vertex~$v$ of~$G_k$ by a copy
  of~$K_{2,3}$ as shown in Fig.~\ref{fig:triconnected}, where $v$ is the
  central (orange) vertex---a tripod---, all
  other vertices of the copy are flat, and the three arrows correspond
  to the three edges of~$G_k$.
  The resulting geometric graph~$F_k$ has $n_k=5k$ vertices and is not
  planar.  Since~$F_k$ has $k$ tripod vertices, by
  Lemma~\ref{lem:flat}, $\segthree(F_k)\le n_k/2+k=3.5k=7n_k/10$.

  \begin{wrapfigure}[9]{r}{.26\textwidth}
    \centering
    \vspace{-4ex}
    \includegraphics{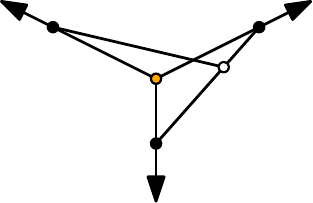}
    \caption{Gadget for the proof of
      Proposition~\ref{prop:tri-cubic-3d}}
    \label{fig:triconnected}
  \end{wrapfigure}

  In order to bound $\segthree(F_k)$ from below, we consider two
  possibilities for the drawing of each subgraph~$K_{2,3}$; either it lies
  in a plane or it doesn't. In the planar case, the convex hull of the drawing
  has at least three extreme points. If none of them was a tripod then
  there would be exactly three extreme points, each a black vertex.
  Thus we could place an additional white vertex in the exterior
  of the convex hull and connect it to all black vertices, obtaining
  an impossible plane drawing of~$K_{3,3}$.
  In the non-planar case, the convex hull consists of at least four
  vertices.  Three of these may connect~$K_{2,3}$ to $F_k-V(K_{2,3})$,
  but again at least one must be a tripod.

  In both cases we hence have $t(\delta) \ge k$ for any 3D
  drawing~$\delta$ of~$F_k$.
  Now Lemma~\ref{lem:flat} yields $\seg(\delta) = n_k/2+t(\delta) \ge 3.5k$.
\hfill\qed

\section{Open Problems}
\label{sec:open}

Apart from improving our bounds, we have the following open problem.

\begin{open}
  Can we produce drawings in~3D (or with bends or crossings in~2D)
  that fit on grids of small size?
\end{open}

\paragraph{Acknowledgments.}

We thank the organizers and participants of the 2019 Dagstuhl seminar
``Beyond-planar graphs: Combinatorics, Models and Algorithms''.  In
particular, we thank G\"unter Rote and Martin Gronemann for
suggestions that led to some of this research.  We also thank Carlos
Alegr\'{i}a.  We thank our reviewers for an idea that improved the
bound in Proposition~\ref{prop:tri-cubic-3d}, for suggesting the
statement of Lemma~\ref{lem:projection}, and for many other helpful
comments.

\bibliographystyle{splncs04}
\bibliography{abbrv,references}

\end{document}